\theoremstyle{definition}
\newtheorem{theorem}{Theorem}
\newtheorem{definition}[theorem]{Definition}
\newtheorem{corollary}[theorem]{Corollary}
\newtheorem{proposition}[theorem]{Proposition}
\newtheorem{example}[theorem]{Example}
\newtheorem{lemma}[theorem]{Lemma}
\newtheorem{observation}[theorem]{Observation}
\newtheorem{remark}[theorem]{Remark}
\newcommand{\name}[1]{\textsc{#1}}
\newcommand{\reg}{\mathbf{Reg}}
\newcommand{\mon}{\mathbf{Mon}}
\newcommand{\Baire}{{\mathbb{N}^\mathbb{N}}}
\newcommand{\dom}{\textrm{dom}}
\newcommand{\id}{\textrm{id}}
\newcommand{\Sierp}{Sierpi\'nski }
\newcommand{\hide}[1]{}
\newcommand{\omitted}[1]{#1}
\newcommand{\lit}[1]{\texttt{\textup{#1}}}
\newcommand{\Nset}{\mathbb N}
\newcommand{\Bset}{\mathbf{Reg}}
\begin{document}

\title{Function spaces for second-order polynomial time}

\author{
Akitoshi Kawamura
\institute{Department of Computer Science\\ University of Tokyo, Japan}
\email{kawamura@is.s.u-tokyo.ac.jp}
\and
Arno Pauly
\institute{Computer Laboratory\\ University of Cambridge, United Kingdom}
\email{Arno.Pauly@cl.cam.ac.uk}
}

\def\titlerunning{Function spaces for second-order polynomial time}
\def\authorrunning{A.~Kawamura \& A.~Pauly}

\maketitle

\begin{abstract}
In the context of second-order polynomial-time computability, we prove that there is no general function space construction. We proceed to identify restrictions on the domain or the codomain that do provide a function space with polynomial-time function evaluation containing all polynomial-time computable functions of that type.

As side results we show that a polynomial-time counterpart to admissibility of a representation is not a suitable criterion for natural representations, and that the Weihrauch degrees embed into the polynomial-time Weihrauch degrees.
\end{abstract}


\section{Introduction}
Computable analysis (e.g.~\cite{weihrauchd}) deals with computability questions for operators from analysis such as integration, differentiation, Fourier transformation, etc.. In general, the actual computation is envisioned to be performed on infinite sequences over some finite or countable alphabet, this model is then lifted to the spaces of interest by means of representations. Thus, an adequate choice of representations for the various relevant spaces is the crucial foundation for any investigation in computable analysis.

At first, the search for good representations proceeded in a very ad-hoc fashion, exemplified by \name{Turing}'s original definition of a computable real number as one with computable decimal expansion \cite{turing} and later correction to one with a computable sequence of nested rational intervals collapsing to the number \cite{turingb}\footnote{This choice of a representation, which is indeed a \emph{correct} one, is credited to \name{Brouwer} by \name{Turing}.}.

The development of more systematic techniques to identify good representations had two interlocked main components: One, the identification of \emph{admissibility} as the central criterion whenever the space in question already carries a natural topology by \name{Kreitz} and \name{Weihrauch} \cite{kreitz} and later \name{Schr\"oder} \cite{schroder}. Two, the observation that one can form function spaces in the category of represented spaces (e.g.~\cite{weihrauchk}, \cite{bauer3}). Using the ideas of synthetic topology \cite{escardo}, this suffices to obtain good representations of spaces just from their basic structure\footnote{The concept of structure here goes beyond topologies, as witnessed e.g.~by the treatment of hyperspaces of measurable sets and functions in \cite{pauly-descriptive,pauly-descriptive-lics} or of the countable ordinals in \cite{pauly-ordinals-mfcs,zhenhao}.} (demonstrated in \cite{pauly-synthetic}).

While computable analysis has obtained a plethora of results, for a long time the aspect of computational complexity has largely been confined to restricted settings (e.g.~\cite{weihrauchf}) or non-uniform results (e.g.~\cite{ko}). This was due to the absence of a sufficiently general theory of second-order polynomial-time computability --  a gap which was filled by \name{Cook} and the first author in \cite{kawamura}. This theory can be considered as a refinement of the computability theory. In particular, this means that for doing complexity theory, one has to choose well-behaved representations for polynomial-time computation out of the equivalence classes w.r.t.~computable translations.

Various results on individual operators have been obtained in this new framework \cite{kawamura5,kawamura4,kawamura3,roesnick}, leaving the field at a very similar state as the early investigation of computability in analysis: While some indicators are available what good choices of representations are, an overall theory of representations for computational complexity is missing. Our goal here is to provide the first steps towards such a theory by investigating the role of admissibility and the presence of function spaces for polynomial-time computability.

\section{Background on second-order polynomial-time computability}

We will use (a certain class of) string functions to encode the objects of interest. We fix some alphabet $\varSigma$.
We say that a (total) function $\varphi \colon \varSigma ^* \to \varSigma ^*$ is
\emph{regular}
if it preserves relative lengths of strings in the sense that
$\lvert \varphi (u) \rvert \leq \lvert \varphi (v) \rvert$
whenever $\lvert u \rvert \leq \lvert v \rvert$.
We write $\Bset$ for the set of all regular functions.
We restrict attention to regular functions (rather than using
all functions from $\varSigma ^*$ to $\varSigma ^*$)
to keep the notion of their \emph{size} (to be defined shortly) simple.

We use an oracle Turing machine (henceforth just ``machine'')
to convert regular functions to regular functions (Figure~\ref{figure: oracle machine}).
\begin{figure}
\begin{center}
\includegraphics[scale=1.1]{./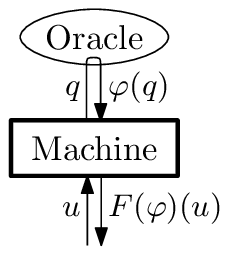}
\caption{A machine computing a function $F \colon \Bset \to \Bset$.}
\label{figure: oracle machine}
\end{center}
\end{figure}

\begin{definition}
\label{definition: oracle machine}
A machine~$M$
computes a partial function $F : \subseteq \Bset \to \Bset$
if
for any $\varphi \in \dom F$,
the machine~$M$ on oracle~$\varphi$ and any string~$u$ outputs $F(\varphi)(u)$ and halts.
\end{definition}

\begin{remark}
For computability,
this is equivalent to the model where
a Turing machine converts infinite strings to infinite strings.
For the discussion of polynomial-time computability, however,
we really need to use strings functions
in order to encode information efficiently
and to measure the input size,
as we will see below.
\end{remark}

Regular functions map strings of equal length to strings of equal length.
Therefore it makes sense to
define the
\emph{size}
$\lvert \varphi \rvert \colon \Nset \to \Nset$
of a regular function~$\varphi$
to be the (non-decreasing) function
$\lvert \varphi \rvert (\lvert u \rvert) = \lvert \varphi (u) \rvert$.
We will use $\mon$ to denote the strictly monotone functions from $\Nset$ to $\Nset$.
For technical reasons, we will tacitly restrict ourselves to those regular functions $\varphi$ with $|\varphi| \in \mon$, this does not impede generality\footnote{Given some $\varphi \in \reg$, let $\varphi'$ be defined by $\varphi'(v) = v\varphi(v)$. Then the function $\mathord\cdot' : \reg \to \reg$ is polynomial-time computable, and has a polynomial-time computable inverse. Moreover, $|\varphi'| \in \mon$ for all $\varphi \in \reg$.}.

We will make use of a polynomial-time computable pairing function $\langle, \rangle : \varSigma^* \times \varSigma^* \to \varSigma^*$, which we want\footnote{While this choice is a bit wasteful, it is useful for technical reasons, and ultimately does not matter for polynomial-time computability.} to satisfy $|\langle u, v\rangle| = |u| \times |v|$. This is then lifted to a pairing function on $\reg$ via $\langle \varphi, \phi \rangle(u) = \langle \varphi(u), \psi(u)\rangle$, and to a mixed pairing function for $\langle -,-\rangle : \varSigma^* \times \reg \to \reg$.

Now we want to define what it means for a machine to run in polynomial time.
Since $\lvert \varphi \rvert$ is a function,
we begin by defining polynomials in a function,
following the idea of Kapron and Cook~%
\cite{kapron2}.
\emph{Second-order polynomials}
(in type-$1$ variable~$\lit L$ and type-$0$ variable~$\lit n$)
are defined
inductively as follows:
a positive integer is a second-order polynomial;
the variable~$\lit n$ is also a second-order polynomial;
if $P$ and $Q$ are second-order polynomials,
then so are $P + Q$, $P \cdot Q$ and $\lit L (P)$.
An example is
\begin{equation}
 \label{equation: second-order polynomial example}
  \lit L \bigl( \lit L (\lit n \cdot \lit n) \bigr) + \lit L \bigl( \lit L (\lit n) \cdot \lit L (\lit n) \bigr) + \lit L (\lit n) + 4.
\end{equation}
A second-order polynomial~$P$ specifies a
function, which we also denote by $P$,
that takes functions $L \in \mon$
to another function $P (L) \in \mon$
in the obvious way.
For example,
if $P$ is the above second-order polynomial~\eqref{equation: second-order polynomial example}
and $L (n) = n ^2$,
then $P (L)$ is given by
\begin{equation}
  P (L) (n) = \bigl( (n \cdot n) ^2 \bigr) ^2 + (n ^2 \cdot n ^2) ^2 + n ^2 + 4
            = 2 \cdot n ^8 + n ^2 + 4.
\end{equation}
As in this example,
$P (L)$ is a (usual first-order) polynomial if $L$ is.

\begin{definition}
 \label{definition: bounded by second-order polynomial}
A machine~$M$
runs in \emph{polynomial time}
if there is a second-order polynomial~$P$ such that,
given any $\varphi \in \Bset$ as oracle
and any $u \in \varSigma ^*$ as input,
$M$ halts within $P (\lvert \varphi \rvert) (\lvert u \rvert)$ steps.
\end{definition}

This defines the class of
(polynomial-time) computable
functions from $\Bset$ to $\Bset$.
We can suitably define
some other complexity classes related to
nondeterminism or space complexity,
as well as the notions of reduction and hardness~%
\cite{kawamura}.

A
\emph{representation}
$\delta$ of a set~$X$ is formally
a partial function from $\Bset$ to $X$ that is
surjective---that is,
for each $x \in X$,
there is at least one $\varphi \in \reg$
with $\delta (\varphi) = x$.
We say that $\varphi$ is a
\emph{$\delta$-name}
of $x$. A \emph{represented space} is a pair $\mathbf{X} = (X, \delta_X)$ of a set $X$ together with a representation $\delta_X$ of it. For a function $f : \subseteq \mathbf{X} \to \mathbf{Y}$ between represented spaces $\mathbf{X}$, $\mathbf{Y}$ and $F : \subseteq \reg \to \reg$, we call $F$ a realizer of $f$ (notation $F \vdash f$), iff $\delta_Y(F(p)) = f(\delta_X(p))$ for all $p \in \dom(f\delta_X)$. A map between represented spaces is called (polynomial-time) computable, iff it has a (polynomial-time) computable realizer.

Type-2 complexity theory generalizes classical complexity theory, as we can regard the objects of the latter as special $\reg$-represented spaces. In the following, we will in particular understand $\mathbb{N}$ to be represented via $\delta_\mathbb{N}(\varphi) = |\varphi(0)|$, i.e.~using an adaption of the unary representation (although not much would change if the binary representation were used instead).

\section{Some properties of second-order polynomials}
We will establish some properties of second-order polynomials as the foundation for our further investigations. Our primary interest is in capturing the rates of asymptotic growth in both arguments, or, rather, a generalization of the notion of asymptotic growth of first-order polynomials (and functions in general) to second-order polynomials (and functionals in general). We arrive at the following definition:

\begin{definition}
\label{def:asymptotic}
Let $P$, $Q$ be second-order polynomials. We write $P \in \mathcal{O}^2(Q)$ iff
\[\exists q \in \mon, k \in \mathbb{N} \quad \forall p \in \mon, n \in \mathbb{N} \quad P(p)(n) \leq Q(p \times q)((n+1)^k)\]
\end{definition}

We subsequently introduce the notion of the second-order degree of a second-order polynomial -- just as the first-order degree is intricately to asymptotic growth of first-order polynomials, the second-order degree will prove to be a valuable tool in the classifications required for our work.

\begin{definition}[\footnote{We point out that this definition differs from the one given in previous versions, in particular in \cite{pauly-kawamura}.}]
\label{def:degree}
The second-order degree of a second-order polynomial (denoted by $\deg$) will be defined interleaved with its type $(\operatorname{type}$), which is only used for the definition here:
\begin{itemize}
\item $\deg(1) = 0$, $\operatorname{type}(1) = \mathbf{m}$
\item $\deg(n) = 0$, $\operatorname{type}(n) = \mathbf{m}$
\item $\deg(P+Q) = \max \{\deg(P), \deg(Q)\}$, if $\operatorname{type}(P) = \mathbf{a} \wedge \deg(P) = \max \{\deg(P), \deg(Q)\}$ or $\operatorname{type}(Q) = \mathbf{a} \wedge \deg(Q) = \max \{\deg(P), \deg(Q)\}$, then $\operatorname{type}(P + Q) = \mathbf{a}$, else $\operatorname{type}(P + Q) = \mathbf{m}$.
\item $\deg(\lit L(P)) = \deg(P) + 1$, $\operatorname{type}(\lit L(P)) = \mathbf{a}$
\item $\deg(P \times Q) = \max \{\deg(P), \deg(Q)\} + 1$, if $\operatorname{type}(P) = \mathbf{a} \wedge \deg(P) = \max \{\deg(P), \deg(Q)\}$ or $\operatorname{type}(Q) = \mathbf{a} \wedge \deg(Q) = \max \{\deg(P), \deg(Q)\}$
\item $\deg(P \times Q) = \max \{\deg(P), \deg(Q)\}$ else
\item $\operatorname{type}(P \times Q) = \mathbf{m}$
\end{itemize}
\end{definition}

Informally, the degree counts the number of nested function applications plus the number of type 1 polynomials of degree greater than 1 applied intermittently. A related notion is the \emph{depth} of a second-order polynomial introduced as a measure of complexity by \name{Kapron} and \name{Cook} \cite{kapron2}. The depth simply counts the number of nested function applications, we thus find that $\operatorname{depth}(P) \leq \deg P \leq \operatorname{depth}(P)$ for all second-order polynomials $P$ (and for fixed depth, the degree can vary over the entire interval given). A further related concept -- the hyperdegree -- was recently suggested by \name{Ziegler} \cite{ziegler-dagstuhl}, this is a first-order polynomial describing, in some sense, the rate of growth of the second-order polynomial. The precise relationship between the hyperdegree and the second-order degree is currently unknown.

\begin{example}
Some examples of second order degrees:
\begin{itemize}
\item $\deg(\lit L (n^2)) = 1$
\item $\deg(\lit L (2 \cdot \lit L)) = 2$
\item $\deg(\lit L ((\lit L)^2)) = 3$
\item $\deg(\lit L ((\lit L)^2) + (\lit L(\lit L))^{1000}) = 3$
\end{itemize}
\end{example}

\begin{lemma}
\label{lem:structuretypem}
Let $Q$ be a second-order polynomial of type $\mathbf{m}$ with $\deg(Q) > 0$. Then there is a first-order multivariate polynomial $t$ and a finite number of second-order polynomials $Q_1, \ldots, Q_2$ with $Q = t(\lit L(Q_1), \ldots, \lit L(Q_n))$ and $\max_{i \leq n} \deg(Q_i) + 2 = \deg(Q)$.
\begin{proof}
We consider the term-tree of $Q$, and more specifically, all outer-most occurrences of $\lit L$. The subtrees below these induce the second-order polynomials $Q_i$. By replacing each occurrence of $\lit L$ and subsequent subtree by a different (first-order) variable $x_i$, we obtain a term-tree for the first-order polynomial $q$. Computing the degree of $Q$ inductively following Definition \ref{def:degree} yields the relationship $\max_{i \leq n} \deg(Q_i) + 2 = \deg(Q)$.
\end{proof}
\end{lemma}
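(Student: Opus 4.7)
The plan is to construct the decomposition explicitly and then verify the degree equation by structural induction on the first-order part of $Q$. First I would take the term-tree of $Q$ and mark every outermost occurrence of $\lit L$ (a node labelled $\lit L$ with no $\lit L$ ancestor). The subtree immediately below the $i$-th such node is $Q_i$, and replacing each marked subtree by a fresh variable $x_i$ yields a first-order polynomial $t(x_1, \ldots, x_n, \lit n)$ with $Q = t(\lit L(Q_1), \ldots, \lit L(Q_n))$ syntactically. The assumption $\deg(Q) > 0$ forces $n \geq 1$: otherwise $Q$ would contain no $\lit L$, and a short induction using Definition \ref{def:degree} shows that any polynomial built only from $1$, $\lit n$, $+$ and $\times$ has degree $0$.

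The heart of the argument is the identity $\max_i \deg(Q_i) + 2 = \deg(Q)$. I would prove the following trichotomy by induction on the term-tree of an arbitrary first-order $t$: for any second-order polynomials $R_1, \ldots, R_n$ of type $\mathbf{a}$ with degrees $e_i$, the substituted term $R = t(R_1, \ldots, R_n)$ satisfies exactly one of
\begin{itemize}
\item[(A)] $t$ contains no $x_i$, and then $\deg(R) = 0$, $\operatorname{type}(R) = \mathbf{m}$;
\item[(B)] $\operatorname{type}(R) = \mathbf{a}$ and $\deg(R) = \max_i e_i$;
\item[(C)] $\operatorname{type}(R) = \mathbf{m}$ and $\deg(R) = \max_i e_i + 1$.
\end{itemize}
The base cases (constants, $\lit n$, a single $x_i$) are immediate. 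The inductive steps for $t = t_1 + t_2$ and $t = t_1 \cdot t_2$ reduce to the nine combinations of (A), (B), (C) for the two subterms; in each combination Definition \ref{def:degree} pins down the resulting case mechanically. One observes that sums of two (B)'s stay in (B), that any product involving a factor of type $\mathbf{a}$ at the maximal degree produces (C), and that (C) is absorbing under both operations whenever the other operand is not strictly below the current maximum degree.

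Applied to $R_i = \lit L(Q_i)$, for which $e_i = \deg(Q_i) + 1$: the hypothesis $\deg(Q) > 0$ excludes (A) and the hypothesis $\operatorname{type}(Q) = \mathbf{m}$ excludes (B), so $Q$ must be in case (C) and satisfies $\deg(Q) = \max_i e_i + 1 = \max_i \deg(Q_i) + 2$, as required.

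The principal obstacle I anticipate is the case analysis itself: Definition \ref{def:degree} treats $+$ and $\times$ very asymmetrically, and one must track both the degree and the type in tandem to see when the multiplicative ``$+1$'' fires. The trichotomy is chosen precisely so that these interactions are localised to the outermost level of $t$, reducing the bookkeeping to a finite case table rather than a global invariant on the whole tree.
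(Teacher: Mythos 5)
Your proposal is correct and follows essentially the same route as the paper: decompose $Q$ at the outermost occurrences of $\lit L$ and verify $\deg(Q)=\max_i\deg(Q_i)+2$ by an inductive degree computation over the term-tree of $t$ following Definition \ref{def:degree}, which the paper only sketches and your trichotomy (A)/(B)/(C) makes explicit. The only nitpick is that in cases (B) and (C) the maximum $\max_i e_i$ should be read as ranging over the variables actually occurring in the subterm under consideration during the induction; with that reading your case analysis goes through, and in the final application to $Q$ every $Q_i$ occurs, so the stated conclusion is unaffected.
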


Just as the degree of an ordinary polynomial uniquely determines its $\mathcal{O}$-notation equivalence class, we find a similar result for the second-order degree and second-order polynomials. The role of the monomials $x^n$ are taken by the second-order polynomials $P_n$ defined via $P_0(p)(k) = k$ and $P_{n+1}(p)(k) = p(P_n(p)(k))$.

\begin{lemma}
\label{lemma:degreeuniversal}
$Q \in \mathcal{O}^2(P_{\max \{\deg{Q},1\}})$ for any second-order polynomial $Q$.
\end{lemma}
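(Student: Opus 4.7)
The plan is to induct on the term structure of $Q$. Before starting the induction I would isolate three auxiliary properties of the family $P_d$ that follow directly from the recursion $P_0(p)(k)=k$, $P_{d+1}(p)(k)=p(P_d(p)(k))$ together with the fact that every $p\in\mon$ satisfies $p(x)\geq x$: (M) monotonicity of $P_d(p\cdot q)(m)$ in each of $d$, $q$, and $m$; (C) constant absorption, $C\cdot P_d(p\cdot q)(m)\leq P_d(p\cdot(Cq))(m)$ for $d\geq 1$; and (P) power absorption, $P_d(p\cdot q)(m)^c\leq P_{d+1}(p\cdot q')(m)$ whenever $q'(x)\geq q(x)\cdot x^{c-1}$, because then $(p\cdot q')(x)\geq x^c$ and the outermost application in $P_{d+1}$ swallows the $c$-th power.

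The base cases $Q=1$ and $Q=n$ are settled by taking $q=\id$ and $k=1$. For $Q=\lit L(P)$ I would apply the inductive hypothesis to $P$ and then apply $p\cdot q$ on top, lifting a $P_{\max\{\deg P,1\}}$ bound to a $P_{\max\{\deg P,1\}+1}$ bound; this matches $\deg(\lit L(P))=\deg P+1$ when $\deg P\geq 1$, and the edge case $\deg P=0$ is handled directly by bounding $P$ pointwise by some $(n+1)^c$. The sum case $Q=P+R$ then follows by applying the hypothesis to each summand, unifying their parameters via (M), and absorbing the factor of $2$ via (C).

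The only real subtlety is the product case $Q=P\cdot R$. When $\deg(P\cdot R)=\max\{\deg P,\deg R\}+1=:D$, I would unify the two inductive bounds to obtain $P,R\leq P_{D-1}(p\cdot q)((n+1)^k)$, so that $P\cdot R\leq P_{D-1}(p\cdot q)((n+1)^k)^2$, which is at most $P_D(p\cdot q')((n+1)^k)$ by (P) with $c=2$. The hard subcase is $\deg(P\cdot R)=\max\{\deg P,\deg R\}=:D$: the degree rule then forces $P$ (WLOG with $\deg P=D$) to be of type $\mathbf{m}$, and either $\deg R<D$ or $R$ is of type $\mathbf{m}$ at degree $D$ as well. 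If $D=0$ both factors are first-order polynomials in $n$ and the bound is immediate; otherwise a short check on the inductive clauses of Definition~\ref{def:degree} shows that no type-$\mathbf{m}$ polynomial has degree exactly $1$, hence $D\geq 2$ and Lemma~\ref{lem:structuretypem} applies, writing $P=t(\lit L(P_1),\dots,\lit L(P_m))$ with each $\deg(\lit L(P_i))=D-1$. The IH gives each $\lit L(P_i)\leq P_{D-1}(p\cdot q)((n+1)^k)$, so $P\leq C\cdot P_{D-1}(p\cdot q)((n+1)^k)^E$ for constants $C,E$ determined by the first-order polynomial $t$; a parallel argument handles $R$ (or $R\leq P_{D-1}(p\cdot q)((n+1)^k)$ directly when $\deg R<D$). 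Thus $P\cdot R$ is bounded by a constant multiple of a power of $P_{D-1}(p\cdot q)((n+1)^k)$, and (P) followed by (C) converts this to the required $P_D$ bound.

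The bottleneck is exactly this type-$\mathbf{m}$ product subcase, where the asymmetric handling of types $\mathbf{a}$ and $\mathbf{m}$ in Definition~\ref{def:degree} has to match the recursion defining $P_d$; Lemma~\ref{lem:structuretypem} captures this match, and property (P) is what absorbs the resulting power of $P_{D-1}$ into a single extra layer of $P_d$.
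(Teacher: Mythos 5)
Your argument is correct, and its two load-bearing ingredients are the same as in the paper's proof: the decomposition of type-$\mathbf{m}$ polynomials from Lemma \ref{lem:structuretypem}, and the trick of absorbing the first-order polynomial $t$ (your properties (C) and (P)) into an extra factor multiplied onto the oracle, which is exactly the paper's step $t'(y)\leq (p\times q'\times t')(y)$. The organization differs: the paper inducts on degree and type, applying Lemma \ref{lem:structuretypem} to $Q$ itself whenever $Q$ has type $\mathbf{m}$ (its displayed cases are degree $0$, the Lemma-\ref{lem:structuretypem} decomposition, and the two $\lit L(\cdot)$ cases, with sums/products handled implicitly), whereas you run a plain structural induction over the constructors and invoke Lemma \ref{lem:structuretypem} only inside the product subcase where the degree does not increase. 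This reorganization is what forces your auxiliary observation that no type-$\mathbf{m}$ polynomial has degree exactly $1$; that observation is indeed true and follows from Definition \ref{def:degree} together with the companion fact that every degree-$0$ polynomial has type $\mathbf{m}$ -- note you also need that companion fact in your first product subcase to justify $\max\{\deg P,\deg R\}\geq 1$ there, since it is what rules out a type-$\mathbf{a}$ factor of degree $0$. Two cosmetic points: Lemma \ref{lem:structuretypem} only gives $\deg(\lit L(P_i))\leq D-1$ with equality for the maximal index, not for each $i$ (harmless, since monotonicity of $P_d$ in $d$ is all you use), and the base case should cover arbitrary positive integer constants $c$, which requires taking $q$ with $q(1)\geq c$ rather than $q=\id$. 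On the whole your structural induction is a clean equivalent of the paper's proof and, if anything, makes the exhaustiveness of the case analysis more transparent.
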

\begin{proof}
By Definition \ref{def:asymptotic}, we need to show that for any second-order polynomial $Q$ there are $q \in \mon$ and $n \in \mathbb{N}$ such that $Q(p)(k) \leq P_{\max\{\deg(Q),1\}}(p \times q)((k+1)^n)$ for all $p \in \mon$, $k \in \mathbb{N}$. Our proofs proceeds by induction of the degree and the type, implicitly invoking Lemma \ref{lem:structuretypem} to ensure that our cases are indeed exhaustive.
\renewcommand\descriptionlabel[1]{\normalfont [\textbf{Case: }#1]}
\begin{description}
\item[$\deg(Q) = 0$] In this case $Q$ does not contain the first-order variable, and thus $Q (p)$ is an ordinary polynomial $q$.  We find $Q(p) (k) = q (k) = P _1 (q) (k) \leq P_1(\langle p, q\rangle)(k+1)$.

\item[$Q = t(Q_1,\ldots,Q_l)$, $\forall i \leq l \ . \ 0 < \deg(Q_i) < \deg(Q)$] By induction hypothesis, let $q_i$, $n_i$ be such that $Q_i(p)(k) \leq P_{\deg(Q) - 1}(p \times q_i)((k+1)^{n_i})$ for all $p \in \mon$, $k \in \mathbb{N}$. Let $q' := \langle q_1, \ldots, q_n\rangle$ and $n := \max_{i \leq l} n_i$. We now find that $\max_{i \leq l} Q_i(p)(k) \leq P_{\deg(Q) - 1}(p \times q')((k+1)^{n})$ for all $p \in \mon$, $k \in \mathbb{N}$.

    Next, let $t'$ be the univariate first-order polynomial obtained from $t$ by identifying all variables. We can now calculate:

    \begin{align*}
    Q (p)(k)  \leq & t'(\max_{i \leq l} Q_i(p)(k))\\
    & \leq  t'(P_{\deg(Q) - 1}(p \times q')((k+1)^{n}))\\
    & \leq (p \times q' \times t')(P_{\deg(Q) - 1}(p \times q' \times t')((k+1)^{n}))\\
    & = P_{\deg(Q)}(p \times q' \times t')((k+1)^{n})
    \end{align*}

    Thus, $q' \times t'$ and $n$ witness the claim.

\hide{
\item[$Q = Q_1 + Q_2$] Let $q_1$, $q_2$, $n_1$, $n_2$ be suitable choices for the component polynomials. Then we have
\begin{align*}
&
 Q (p) (k)
=
 Q _1 (p) (k) + Q _2 (p) (k)
\\
&
\leq
  P _{\max \{\deg(Q _1), 1\}} (p \times q_1) \bigl( (k + 1) ^{n _1} \bigr)
 +
  P _{\max \{\deg(Q _2), 1\}} (p \times q_2) \bigl( (k + 1) ^{n _2} \bigr)
\\
&
=
  (p \times q _1) \bigl(
   P _{\max \{\deg (Q _1), 1\} - 1} (p \times q _1) \bigl( (k + 1) ^{n _1} \bigr)
  \bigr)
\\
& \qquad
 +
  ( p \times q _2) \bigl(
   P _{\max\{\deg (Q _2), 1\} - 1} (p \times q _2) \bigl( (k + 1) ^{n _2} \bigr)
  \bigr)
\\
&
\leq
  ( p \times q _1 ) \bigl(
   P _{\max \{\deg (Q), 1\} - 1} \bigl( p \times (q _1 + q _2) \bigr) \bigl( (k + 1) ^{\max \{n _1, n _2\}} \bigr)
  \bigr)
\\
& \qquad
 +
  ( p \times q _2) \bigl(
   P _{\max\{\deg (Q), 1\} - 1} \bigl( p \times (q _1 + q _2) \bigr) \bigl( (k + 1) ^{\max \{n _1, n _2\}} \bigr)
  \bigr)
\\
&
=
  \bigl( p \times (q _1 + q _2) \bigr) \bigl(
   P _{\max \{\deg (Q), 1\} - 1} \bigl( p \times (q _1 + q _2) \bigr) \bigl( (k + 1) ^{\max \{n _1, n _2\}} \bigr)
  \bigr)
\\
&
=
 P _{\max\{\deg (Q), 1\}} \bigl( p \times (q _1 + q _2) \bigr) \bigl( (k + 1) ^{\max \{n _1, n _2\}} \bigr),
\end{align*}
so $q_1 + q_2$ and $\max\{n_1, n_2\}$ work as witnesses for $Q$.
\item[$Q = Q_1 \times Q_2$, $\exists i \operatorname{type}(Q_i) = \mathbf{a} \wedge \deg(Q_i) = \max \{\deg(Q_1), \deg(Q_2)\}$] Let $q_1$, $q_2$, $n_1$, $n_2$ be suitable choices for the component polynomials. Then we have
\begin{align*}
&
 Q (p) (k)
=
 Q _1 (p) (k) \times Q _2 (p) (k)
\\
&
\leq
  P _{\max \{\deg(Q _1), 1\}} (p \times q _1) \bigl( (k + 1) ^{n _1} \bigr)
 \times
  P _{\max \{\deg(Q _2), 1\}} (p \times q _2) \bigl( (k + 1) ^{n _2} \bigr)
\\
&
=
  ( p \times q _1 ) \bigl(
   P _{\max \{\deg (Q _1), 1\} - 1} (p \times q _1) \bigl( (k + 1) ^{n _1} \bigr)
  \bigr)
\\
& \qquad
 \times
  (p \times q _2 ) \bigl(
   P _{\max\{\deg (Q _2), 1\} - 1} (p \times q _2) \bigl( (k + 1) ^{n _2} \bigr)
  \bigr)
\\
&
\leq
  ( p \times q _1 ) \bigl(
   P _{\max \{\deg (Q) -1, 1\} - 1} (p ^2 \times q _1 \times q _2) \bigl( (k + 1) ^{\max \{n _1, n _2\}} \bigr)
  \bigr)
\\
& \qquad
 \times
  ( p \times q _2) \bigl(
   P _{\max\{\deg (Q) -1, 1\} - 1} (p ^2 \times q _1 \times q _2) \bigl( (k + 1) ^{\max \{n _1, n _2\}} \bigr)
  \bigr)
\\
&
=
  ( p ^2 \times q _1 \times q _2) \bigl(
   P _{\max \{\deg (Q) -1, 1\} - 1} (p ^2 \times q _1 \times q _2) \bigl( (k + 1) ^{\max \{n _1, n _2\}} \bigr)
  \bigr)
\\
&
=
 P _{\max\{\deg (Q) -1, 1\}} ( p ^2 \times q _1 \times q _2) \bigl( (k + 1) ^{\max \{n _1, n _2\}} \bigr)
\\
&
\leq
 P _{\deg (Q)} (p \times q _1 \times q _2 \times \id^2) \bigl( (k + 1) ^{\max \{n _1, n _2\}} \bigr),
\end{align*}
so $q_1 \times q_2 \times \id^2$ and $\max\{n_1, n_2\}$ work as witnesses for $Q$.

}

\item[$Q = \lit L(Q_1)$, $\deg(Q_1) = 0$]
As pointed out above, $Q _1 (p)$ is some ordinary polynomial $q _1$ not dependent on $p$.
In particular, there is some $n \in \mathbb{N}$ such that
$q _1 (k) \leq (k + 1) ^n$.
We now find:
\begin{align*}
 Q (p) (k)
 & =
 p (Q _1 (p) (k)) \\
& =
 p (q _1 (k))\\
& \leq
 p \bigl( (k+1)^n \bigr)\\
& =
 P _1 (p) \bigl( (k + 1) ^n \bigr)\\
& \leq
 P _1 ( p \times 1 ) \bigl( (k + 1) ^n \bigr)\\
\end{align*}
\item[$Q = \lit L(Q_1)$, $\deg(Q_1) > 0$]
If $
 Q _1 (p) (k)
\leq
 P _{\deg (Q _1)} (p \times q ), (k + 1) ^n)
$, then:
\begin{align*}
 Q (p) (k)
& =
 p (Q _1 (p) (k))\\
& \leq
 p (P _{\deg (Q _1)} (p \times q) ((k + 1) ^n))\\
& \leq
 (p \times q ) (P _{\deg (Q_1)} (p \times q ) ((k+1)^n))\\
& =
 P _{\deg (Q)} (p \times q ) ((k + 1) ^n)\\
 \end{align*}
So the same witnesses working for $Q_1$ also work for $Q$.
\end{description}
\end{proof}

\begin{lemma}
\label{lem:lgrowth}
Let $P$, $Q$ be second-order polynomials, $q \in \mon$ and $k \in \mathbb{N}$. If there are $p \in \mon$, $n \in \mathbb{N}$ such that $P(p)(n) > Q(p \times q)((n+1)^k)$, then for every $C \in \mathbb{N}$ there is a $p' \in \mon$ such that: \[(\lit L(P))(p')(n) > C + (\lit L(Q))(p' \times q)((n+1)^k)\]
\begin{proof}
By monotonicity and continuity of second-order polynomials, the premise depends only on the values of $p$ at $i < N := P(p)(n)$. We will obtain $p'$ by choosing $p'(N)$ sufficiently large, extending with $p'(N + i) = p'(N) + i$, and retaining $p'(i) = p(i)$ for $i < N$. By writing our the desired inequality, we find the criterion:
\[p'(N) > C + (p \times q)\left ( Q(p \times q)((n+1)^k) \right )\]
\end{proof}
\end{lemma}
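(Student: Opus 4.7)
My plan is to reduce the target inequality to a single numerical condition on the value $p'(N)$, by forcing $p'$ to agree with $p$ on every input that influences the evaluations appearing on both sides. Abbreviating $N := P(p)(n)$ and $M := Q(p \times q)((n+1)^k)$, the hypothesis reads $N > M$, and after unfolding $\lit L$ the goal becomes
\[
p'\bigl(P(p')(n)\bigr) \;>\; C + (p' \times q)\bigl(Q(p' \times q)((n+1)^k)\bigr).
\]
If I can arrange that $P(p')(n) = N$, $Q(p' \times q)((n+1)^k) = M$, and $p'(M) = p(M)$, this collapses to the single arithmetic inequality $p'(N) > C + p(M) \cdot q(M)$, which is easy to enforce.

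The key enabling ingredient is a continuity property of second-order polynomials: for any such $R$ and any $p \in \mon$, the value $R(p)(n)$ depends only on the restriction of $p$ to $\{0,1,\ldots,R(p)(n)\}$. I would establish this by a short structural induction on $R$: monotonicity of $p$ together with the monotonicity of the building operators $+,\cdot,\lit L$ implies that every intermediate value produced during the evaluation of $R(p)(n)$ is bounded above by $R(p)(n)$, so every argument at which $p$ is queried lies in that interval. Applying this observation to $P$ and to the polynomial $p \mapsto Q(p \times q)$, and using $M \leq N$, shows that any $p'$ agreeing with $p$ on $\{0,\ldots,N\}$ automatically preserves both $N$ and $M$, and in particular keeps $p'(M) = p(M)$.

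Given the continuity lemma, the construction is explicit: set $p'(i) := p(i)$ for $0 \leq i < N$, choose
\[
p'(N) := \max\bigl\{\, p(N-1),\; C + p(M) \cdot q(M) \,\bigr\} + 1,
\]
and extend by $p'(N+j) := p'(N) + j$ for $j \geq 1$. Strict monotonicity follows from the $+1$ and the linear tail, and substituting into the unfolded inequality delivers the claim. The degenerate case $N = 0$ is ruled out by $N > M \geq 0$.

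The principal obstacle is the continuity lemma itself: the phrase "monotonicity and continuity of second-order polynomials" is invoked but, as far as I can see, not previously formalized in the excerpt. That structural induction is really where the substantive work lies; once it is in hand, the construction and verification above reduce to bookkeeping.
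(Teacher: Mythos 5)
Your proposal is essentially the paper's own proof: the same construction (retain $p$ below $N:=P(p)(n)$, make $p'(N)$ large, extend linearly) and the same final criterion $p'(N) > C + (p\times q)\bigl(Q(p\times q)((n+1)^k)\bigr)$, with the "continuity of second-order polynomials" step made slightly more explicit than in the paper. The only wrinkle is your intermediate claim $P(p')(n)=N$: your continuity lemma would need agreement on $\{0,\ldots,N\}$ including $N$, while your $p'$ is modified at $N$; since any increase there can only raise the left-hand side (one still gets $P(p')(n)\geq N$, hence $(\lit L(P))(p')(n)\geq p'(N)$), the argument goes through at the same level of rigor as the paper's sketch.
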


\begin{corollary}
\label{corr:lgrowth}
If $\lit L(P) \in \mathcal{O}^2(\lit L(Q))$, then $P \in \mathcal{O}^2(Q)$.
\begin{proof}
We can weaken the claim of Lemma \ref{lem:lgrowth} for $C = 0$ by moving the universal quantifiers over $q$ and $k$ into the premise and conclusion. We arrive at the contraposition of the present statement.
\end{proof}
\end{corollary}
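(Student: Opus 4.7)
The plan is to prove the contrapositive: assume $P \notin \mathcal{O}^2(Q)$ and derive $\lit L(P) \notin \mathcal{O}^2(\lit L(Q))$. First I would unfold Definition \ref{def:asymptotic}: the negation of $P \in \mathcal{O}^2(Q)$ says that for every candidate pair $(q, k) \in \mon \times \mathbb{N}$, there exist $p \in \mon$ and $n \in \mathbb{N}$ witnessing $P(p)(n) > Q(p \times q)((n+1)^k)$. This is exactly the hypothesis of Lemma \ref{lem:lgrowth}, so for each such $(q, k)$ I can feed the triple $(p, n, q)$ into Lemma \ref{lem:lgrowth} with $C = 0$ and obtain some $p' \in \mon$ such that $\lit L(P)(p')(n) > \lit L(Q)(p' \times q)((n+1)^k)$.

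Reading off the resulting statement, for every $(q, k)$ there exist $(p', n)$ for which the displayed inequality fails, and this is precisely the negation of $\lit L(P) \in \mathcal{O}^2(\lit L(Q))$. No real calculation is needed here; the only subtlety is a purely logical one, namely that Lemma \ref{lem:lgrowth} reuses the same $q$ and $k$ in its conclusion as in its hypothesis, so the two $\forall q\,\forall k$ quantifiers line up automatically when we stitch the instances together.

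I do not expect any genuine obstacle: the real work was done in Lemma \ref{lem:lgrowth}, which extends $p$ to $p'$ by inflating a single value so that applying $\lit L$ magnifies the gap between $P$ and $Q$ into one between $\lit L(P)$ and $\lit L(Q)$. The corollary itself is thus a one-line application of that lemma with $C = 0$, packaged as a contrapositive; the only care needed is to verify that $C = 0$ is strong enough to preserve strict inequality, which is immediate from the statement of Lemma \ref{lem:lgrowth}.
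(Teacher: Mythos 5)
Your proposal is correct and is essentially the paper's own argument: apply Lemma \ref{lem:lgrowth} with $C=0$ to each pair $(q,k)$ and read the result as the contrapositive of the corollary. No differences worth noting.
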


\begin{lemma}
\label{lem:pgrowth}
Let $P$, $Q$ be second-order polynomials, $q \in \mon$, $k \in \mathbb{N}$ and $r$ be a first-order polynomial with $\deg r \geq 2$. If there are $p \in \mon$, $n \in \mathbb{N}$ such that $P(p)(n) > Q(p \times q)((n+1)^k)$, then there is a $p' \in \mon$ such that:
\[r\left ((\lit L(P))(p')(n) \right ) > (\lit L(\lit L(Q)))(p' \times q)((n+1)^k)\]
\begin{proof}
For each $C \in \mathbb{N}$, we apply Lemma \ref{lem:lgrowth} to obtain some $p'_C$ with:
\[(\lit L(P))(p'_C)(n) > C + (\lit L(Q))(p'_C \times q)((n+1)^k)\]
Abbreviate $N := P(p)(n)$ and $M := \max \{0, ((\lit L(Q))(p' \times q)((n+1)^k)) - N\}$. By the actual construction used in the proof of Lemma \ref{lem:lgrowth}, we find that for $p'_C(N + M) = p'_C(N) + M$. Thus:
\[(\lit L(\lit L(Q)))(p' \times q)((n+1)^k) \leq p'_C(N) + M + q(N + M)\]
The desired inequality now is:
\[r(p'_C(N)) > p'_C(N) + M + q(N + M)\]
As $C$ goes to infinity, also $p'_C(N)$ goes to infinity. The other components remain unchanged. As by assumption $\deg r \geq 2$, the left-hand side will increase at least quadratically and the right-hand side only linear. Thus, by choosing $C$ sufficiently large, the inequality will become true.
\end{proof}
\end{lemma}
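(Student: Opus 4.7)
My plan is to reduce this to Lemma \ref{lem:lgrowth} by invoking it with a free parameter $C$ and then sending $C \to \infty$. Unfolding the outer $\lit L$, we have $(\lit L(\lit L(Q)))(p' \times q)((n+1)^k) = (p' \times q)((\lit L(Q))(p' \times q)((n+1)^k))$, so the crucial point will be that this outer $p' \times q$ is applied at a height where $p'$ can be arranged to grow linearly, while the target $r(\cdot)$ on the left grows at least quadratically.

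First, for each $C \in \mathbb{N}$, I would apply Lemma \ref{lem:lgrowth} to obtain a witness $p'_C \in \mon$ satisfying
\[(\lit L(P))(p'_C)(n) > C + (\lit L(Q))(p'_C \times q)((n+1)^k).\]
It is essential to use the specific witness produced in the proof of that lemma: $p'_C$ agrees with the given $p$ on $[0, N)$ for $N := P(p)(n)$, its value $p'_C(N)$ can be made as large as desired (and in particular grows with $C$), and it is extended above $N$ by $p'_C(N + i) = p'_C(N) + i$. Under this choice, $(\lit L(P))(p'_C)(n) = p'_C(N)$, so $p'_C(N)$ is precisely the quantity we can drive to infinity.

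Next I would bound the right-hand side of the desired inequality. Set $A := (\lit L(Q))(p'_C \times q)((n+1)^k)$ and $M := \max\{0, A - N\}$, so that $A \leq N + M$. The linear extension of $p'_C$ above $N$ yields $p'_C(A) \leq p'_C(N+M) = p'_C(N) + M$, and combined with monotonicity of $q$ this gives an upper bound on $(\lit L(\lit L(Q)))(p'_C \times q)((n+1)^k)$ that is \emph{linear} in $p'_C(N)$ with the remaining ingredients ($N$, $M$, and values of $q$ at indices $\leq N + M$) determined entirely by $p, q, n, k$ and hence independent of $C$.

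Finally, the left-hand side equals $r(p'_C(N))$, which grows at least quadratically in $p'_C(N)$ by the assumption $\deg r \geq 2$. Since the right-hand side is bounded linearly in $p'_C(N)$ with coefficients independent of $C$, choosing $C$ sufficiently large forces the desired strict inequality. The main technical obstacle I expect is the bookkeeping required to justify that both the inner $Q$-evaluation and the inner $\lit L(Q)$-evaluation probe $p'_C$ only at indices lying either in $[0, N)$, where $p'_C$ coincides with $p$, or in the regime $[N, N+M]$, where $p'_C$ grows linearly; this in turn rests on the monotonicity/continuity argument already used in Lemma \ref{lem:lgrowth} together with the very definition of $M$.
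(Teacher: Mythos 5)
Your proposal follows essentially the same route as the paper's proof: invoke Lemma \ref{lem:lgrowth} with a free constant $C$, exploit its explicit witness (agreeing with $p$ below $N$, large at $N$, extended linearly above), bound the right-hand side by a quantity linear in $p'_C(N)$ with $C$-independent ingredients, and let the quadratic growth of $r$ win for large $C$. If anything, your treatment of the outer application as a product $(p'_C(N)+M)\cdot q(N+M)$ is more careful than the paper's displayed sum bound, and the bookkeeping you flag (that the inner evaluations only probe $p'_C$ where it coincides with $p$, so $M$ is independent of $C$) is exactly the point the paper glosses over with ``the other components remain unchanged.''
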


\begin{corollary}
\label{corr:pgrowth}
If $r(\lit L(P)) \in \mathcal{O}^2(\lit L(\lit L(Q)))$ with $\deg r \geq 2$, then $P \in \mathcal{O}^2(Q)$.
\begin{proof}
We can weaken the claim of Lemma \ref{lem:pgrowth} by moving the universal quantifiers over $q$ and $k$ into the premise and conclusion. We arrive at the contraposition of the present statement.
\end{proof}
\end{corollary}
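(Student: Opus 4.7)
The plan is to mirror the contrapositive argument used in Corollary \ref{corr:lgrowth}, but invoking Lemma \ref{lem:pgrowth} instead of Lemma \ref{lem:lgrowth}. I would start by unfolding Definition \ref{def:asymptotic} on both sides. The conclusion $P \not\in \mathcal{O}^2(Q)$ (which we want to derive) reads: for every $q \in \mon$ and every $k \in \mathbb{N}$, there exist $p \in \mon$ and $n \in \mathbb{N}$ with $P(p)(n) > Q(p \times q)((n+1)^k)$. Dually, the negation of the hypothesis $r(\lit L(P)) \in \mathcal{O}^2(\lit L(\lit L(Q)))$ asserts the existence of $p' \in \mon$ and $n \in \mathbb{N}$ witnessing $r(\lit L(P))(p')(n) > \lit L(\lit L(Q))(p' \times q)((n+1)^k)$ for every candidate pair $(q,k)$. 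So both assertions share the same quantifier prefix over $(q,k)$, and the proof reduces to a per-pair translation from the first witness to the second.

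The core step is simply an application of Lemma \ref{lem:pgrowth}: given $(q,k)$ and a witness $(p,n)$ of $P \notin \mathcal{O}^2(Q)$, the lemma (which has exactly these data as input and requires $\deg r \geq 2$) hands us a $p' \in \mon$ satisfying the desired strict inequality, with the same $n$ preserved. Reinterpreting $r((\lit L(P))(p')(n))$ as $r(\lit L(P))(p')(n)$ by expanding the composition, this is precisely the witness needed for $r(\lit L(P)) \notin \mathcal{O}^2(\lit L(\lit L(Q)))$.

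The only mild care needed is in the quantifier re-shuffling: the lemma as stated produces, for each fixed $(q,k)$ and each failing $(p,n)$, a $p'$ depending on all four; to obtain the failure of the $\mathcal{O}^2$-inclusion we only need that such a $p'$ exists for every $(q,k)$, which we get by selecting a failing $(p,n)$ first and then invoking the lemma. This is exactly the ``move the universal quantifiers over $q$ and $k$ into the premise and conclusion'' maneuver used in Corollary \ref{corr:lgrowth}. There is no genuine obstacle here, since all the nontrivial work — in particular exploiting $\deg r \geq 2$ to outgrow the extra $p' \times q$ factor coming from the second $\lit L$ on the right-hand side — is already discharged inside Lemma \ref{lem:pgrowth}.
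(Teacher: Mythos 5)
Your proposal is correct and follows essentially the same route as the paper: argue the contrapositive, and for each fixed pair $(q,k)$ take a witness $(p,n)$ of $P \notin \mathcal{O}^2(Q)$ and feed it into Lemma \ref{lem:pgrowth} (using $\deg r \geq 2$) to produce the $p'$ witnessing $r(\lit L(P)) \notin \mathcal{O}^2(\lit L(\lit L(Q)))$ with the same $n$. Your explicit handling of the quantifier reshuffling and the identification $r((\lit L(P))(p')(n)) = r(\lit L(P))(p')(n)$ just spells out what the paper's one-line proof leaves implicit.
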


\begin{theorem}
\label{theo:growth}
For $n \geq 1$ and a second-order polynomial $Q$ we find that $Q \in \mathcal{O}^2(P_n)$ iff $\deg Q \leq n$.
\begin{proof}
One direction is provided by Lemma \ref{lemma:degreeuniversal}. For the other direction, we use induction over the structure of $Q$ as provided by Lemma \ref{lem:structuretypem}, and use Corollaries \ref{corr:lgrowth}, \ref{corr:pgrowth} for the individual steps.
\end{proof}
\end{theorem}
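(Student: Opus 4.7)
The theorem splits into two implications. For the forward direction ($\deg Q \leq n \Rightarrow Q \in \mathcal{O}^2(P_n)$), Lemma~\ref{lemma:degreeuniversal} places $Q$ in $\mathcal{O}^2(P_{\max\{\deg Q, 1\}})$, and since $n \geq 1$ we have $\max\{\deg Q, 1\} \leq n$. A short verification shows $P_m \leq P_n$ pointwise for $m \leq n$ (because $p(k) \geq k$ for any $p \in \mon$, whence $P_{m+1}(p)(k) = p(P_m(p)(k)) \geq P_m(p)(k)$), hence $P_m \in \mathcal{O}^2(P_n)$; transitivity of $\mathcal{O}^2$ (routine to check by composing the witnesses) then yields $Q \in \mathcal{O}^2(P_n)$.

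For the reverse direction I would argue by contraposition, strengthened to the following inductive claim: for every second-order polynomial $Q$ with $\deg Q = m \geq 1$, we have $Q \notin \mathcal{O}^2(P_{m-1})$. This is enough, for if $n < m$ then $P_n \in \mathcal{O}^2(P_{m-1})$ by the forward direction, and $Q \in \mathcal{O}^2(P_n)$ would contradict the claim by transitivity.

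The inductive claim is proved by induction on the term tree of $Q$, using Lemma~\ref{lem:structuretypem} to decompose the type-$\mathbf{m}$ case and lifting the hypothesis via (the contrapositives of) Corollaries~\ref{corr:lgrowth} and~\ref{corr:pgrowth}. For the base $m=1$, one argues directly that any such $Q$ must contain a summand $\lit L(P)$ with $\deg P = 0$, and $\lit L(P)(p)(n) = p(P(n))$ cannot be bounded uniformly in $p$ by the $p$-independent quantity $P_0(p \times q)((n+1)^k) = (n+1)^k$. In the inductive step, split on the type of $Q$. In the type-$\mathbf{a}$ case, $Q$ has a summand $\lit L(Q')$ with $\deg Q' = m-1$; the hypothesis $Q' \notin \mathcal{O}^2(P_{m-2})$ yields $\lit L(Q') \notin \mathcal{O}^2(P_{m-1})$ via Corollary~\ref{corr:lgrowth}, which propagates to $Q$ by pointwise domination. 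In the type-$\mathbf{m}$ case, Lemma~\ref{lem:structuretypem} writes $Q = t(\lit L(Q_1), \ldots, \lit L(Q_k))$ with $\max \deg Q_i = m-2$, and the type constraint forces some monomial of $t$ to be a genuine product of at least two $\lit L$-factors; for $m \geq 3$ this enables Corollary~\ref{corr:pgrowth} (with $r(x) = x^2$) to convert the hypothesis $Q_{i^*} \notin \mathcal{O}^2(P_{m-3})$ into the target $Q \notin \mathcal{O}^2(P_{m-1})$.

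The main obstacle is the type-$\mathbf{m}$ case. First, the dominating sub-monomial may be of the form $\lit L(Q_{i_1}) \cdot \lit L(Q_{i_2})$ with $i_1 \neq i_2$, while Corollary~\ref{corr:pgrowth} is stated for $r(\lit L(P))$ with a single $P$; this requires either a multivariate adaptation of the corollary or a reduction by pointwise comparison to the dominant index. Second, the edge case $m = 2$ escapes Corollary~\ref{corr:pgrowth} entirely (whose right-hand side requires two nested $\lit L$'s, whereas $P_1$ carries only one), and must be handled by a direct calculation at $n = 0$: the dominating product $\lit L(c_1) \cdot \lit L(c_2)$ grows quadratically in $p(\max(c_1,c_2))$, whereas $P_1(p \times q)(1) = p(1) \cdot q(1)$ grows only linearly in $p(1)$.
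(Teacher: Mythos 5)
Your overall route is the same as the paper's: the forward implication via Lemma~\ref{lemma:degreeuniversal} (plus the easy pointwise monotonicity $P_m \leq P_n$ and transitivity of $\mathcal{O}^2$), and the reverse implication by structural induction using Lemma~\ref{lem:structuretypem} and the contrapositives of Corollaries~\ref{corr:lgrowth} and~\ref{corr:pgrowth}; your type-$\mathbf{a}$ step and your direct treatment of genuine squares at the $m=2$ level are fine. However, the first obstacle you flag in the type-$\mathbf{m}$ case is a genuine gap, not a deferrable technicality, and the fallback you offer does not close it. If the dominating monomial is $\lit L(Q_{i_1})\cdot\lit L(Q_{i_2})$ with $\deg Q_{i_1} < \deg Q_{i_2} = m-2$, then ``pointwise comparison to the dominant index'' fails, because this product does not dominate $(\lit L(Q_{i_2}))^2$; and no multivariate variant of Corollary~\ref{corr:pgrowth} can rescue the step, because in this configuration the inductive claim itself is false if Definition~\ref{def:degree} is read literally. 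Concretely, $Q = \lit L(\lit n)\cdot\lit L(\lit L(\lit n))$ gets $\deg Q = 3$ from the stated product rule (the factor of maximal degree has type $\mathbf{a}$), yet $Q \in \mathcal{O}^2(P_2)$: with $q=\id$ and $k=1$, writing $A=(n+1)\cdot p(n+1) \geq p(n)$, one has $P_2(p\times\id)(n+1) = p(A)\cdot A \geq p(p(n))\cdot p(n) = Q(p)(n)$ for all $p\in\mon$ and $n$. So at exactly the spot you single out, your induction (and the theorem with the literal reading of the degree) breaks down.

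Relatedly, your assertion that the type-$\mathbf{m}$ constraint forces some monomial of $t$ to contain at least two $\lit L$-factors does not follow from Definition~\ref{def:degree} as stated: $\lit n\cdot\lit L(\lit n)$ and $2\cdot\lit L(\lit n)$ are of type $\mathbf{m}$ and receive an incremented degree from the stated product rule (consistently with the ``$+2$'' in Lemma~\ref{lem:structuretypem}), yet their $t$ is linear in the $\lit L$-variable and both lie in $\mathcal{O}^2(P_1)$. Your type-$\mathbf{m}$ step thus tacitly replaces the stated product rule by an intended one in which the degree increments only when factors of maximal degree are multiplied with each other (a reading supported by the paper's own example $\deg(\lit L(2\cdot\lit L))=2$, which contradicts the literal rule). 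With that corrected reading made explicit, your induction goes through along the lines you describe and matches what the paper's two-line proof intends; without it, the decisive mixed-product case remains open, and as the example above shows it cannot be closed as stated.
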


\hide{
\begin{lemma}
\label{lemma:degreestrict}
For no $q \in \mon$, $n, m \in \mathbb{N}$ we have $P _{n + 1} (p) (k) \leq P _n (p \times q) ((k+1)^m)$ for all $p \in \mon$, $k \in \mathbb{N}$.
\end{lemma}
\begin{proof}
\omitted{Assume the contrary, and let $q$, $n$, $m$ witness the claim. We proceed to construct a $p$ such that $P_{n+1}(p,1) = p(P_n(p, 1)) > P_n(p \times q, 2^m)$, thus obtaining a contradiction. If $n = 0$, then choosing $p(k) = (k+1)^m + 1$ suffices. Next we define a $p$ in stages which will work for all remaining $n > 0$. For $1 \leq i \leq 2^m$, let $p(i) = 2^m + 1$. This implies $P_2(p, 1) = p(2^m + 1)$ and $P_1(p \times q, 2^m) = (2^m + 1) \times q(2^m)$. So choosing $p(2^m + 1) = ((2^m + 1) \times q(2^m)) + 1$ works. More generally, for $2^m + 1 \leq i \leq (2^m + 1) \times q(2^m)$ we shall set $p(i) = ((2^m + 1) \times q(2^m)) + 1$. Then $P_3(p, 1) = p(((2^m + 1) \times q(2^m)) + 1)$ and $P_2(p \times q, 2^m) = (((2^m + 1) \times q(2^m)) + 1) \times q((2^m + 1) \times q(2^m))$, so choosing $p(((2^m + 1) \times q(2^m)) + 1) = ((((2^m + 1) \times q(2^m)) + 1) \times q((2^m + 1) \times q(2^m))) + 1$ provides the contradiction for $n = 2$. We can continue in the same fashion indefinitely, thus obtaining the remaining cases.
}\end{proof}}

\section{Failure of cartesian closure}
We shall show that the category of $\reg$-represented spaces and polynomial-time computable functions is not cartesian closed. For this we define the functions $\Phi_n : \reg \to \reg$ via $\Phi_0 (\varphi) (w) = w$ and $\Phi_{n+1}(\varphi)(w) = \varphi(\Phi_{n} (\varphi) (w))$. Then computing $\Phi_n(\varphi)(w)$ takes time $\Omega(P_n(|\varphi|)(|w|))$, as already the length of the output provides a lower bound.

\begin{theorem}
\label{theo:functionspaces}
Let the second-order polynomial $P$ witness polynomial-time computability of the function $F : \subseteq \reg \times \reg \to \reg$. For no $\psi \in \reg$ we may have $F(\psi,\varphi) = \Phi_{\deg(P)+1}(\varphi)$ for all $\varphi \in \reg$.
\end{theorem}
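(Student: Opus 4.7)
The strategy is to contrast the mandatory length of the output $\Phi_{d+1}(\varphi)(w)$ with the upper bound on the machine's running time, where I write $d := \deg P$. Since the machine writes at most one output symbol per step, any machine witnessing polynomial-time computability of $F$ with bound $P$ satisfies
\[|F(\psi,\varphi)(w)| \leq P(|\langle\psi,\varphi\rangle|)(|w|) = P(|\psi|\times|\varphi|)(|w|),\]
the last equality being the pairing convention $|\langle\psi,\varphi\rangle| = |\psi|\times|\varphi|$. A straightforward induction on $n$ using the recursion $\Phi_{n+1}(\varphi)(w) = \varphi(\Phi_n(\varphi)(w))$ gives $|\Phi_n(\varphi)(w)| = P_n(|\varphi|)(|w|)$. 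So assuming $F(\psi,\varphi) = \Phi_{d+1}(\varphi)$ for every $\varphi \in \reg$, the inequality
\[P_{d+1}(|\varphi|)(|w|) \leq P(|\psi|\times|\varphi|)(|w|) \qquad \text{for all } \varphi\in\reg,\ w\in\varSigma^*\]
holds.

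Next I would recognise this as a disguised $\mathcal{O}^2$-statement. Since every $L \in \mon$ arises as $|\varphi|$ for some $\varphi \in \reg$ (take $\varphi(u) = 0^{L(|u|)}$), the fixed monotone function $|\psi|$ serves as witness $q$ and $k := 1$ in Definition \ref{def:asymptotic} (monotonicity of $P$ in its second argument absorbs the switch from $|w|$ to $|w|+1$), giving $P_{d+1} \in \mathcal{O}^2(P)$. Substituting $p := |\psi|\times|\varphi|$ into the bound $P(p)(n) \leq P_{\max\{d,1\}}(p\times q')((n+1)^{k'})$ from Lemma \ref{lemma:degreeuniversal} then yields $P_{d+1} \in \mathcal{O}^2(P_{\max\{d,1\}})$ directly, with combined witnesses $|\psi|\times q'$ and $k'$, sidestepping the need for a separate transitivity lemma.

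For $d\geq 1$ this reads $P_{d+1} \in \mathcal{O}^2(P_d)$, which contradicts Theorem \ref{theo:growth} since $\deg P_{d+1} = d+1 > d$. The corner case $d = 0$ is dispatched directly: then $P$ contains no $\lit L$, so $P(|\psi|\times|\varphi|)(|w|) = P(|w|)$ is a fixed first-order polynomial in $|w|$, whereas $|\varphi|(|w|) = P_1(|\varphi|)(|w|)$ can be made arbitrarily large by picking $\varphi$ with super-polynomial $|\varphi|$, for instance $|\varphi|(n) = 2^n$, refuting the inequality at any sufficiently large $w$. The main obstacle I expect is recognising the implicit $\mathcal{O}^2$-statement in the length-vs-runtime inequality, together with handling the $|\psi|$ factor uniformly; once that reformulation is made, the growth results of the previous section close the argument mechanically.
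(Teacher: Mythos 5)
Your proof is correct and follows essentially the same route as the paper's (very terse) argument: compare the output-length lower bound $|\Phi_{d+1}(\varphi)(w)| = P_{d+1}(|\varphi|)(|w|)$ with the runtime upper bound $P(|\psi|\times|\varphi|)(|w|)$, recast this as an $\mathcal{O}^2$-statement, and derive a contradiction with Theorem \ref{theo:growth} via Lemma \ref{lemma:degreeuniversal}. Your explicit treatment of the corner case $\deg P = 0$ (which Theorem \ref{theo:growth} does not directly cover, since it assumes $n \geq 1$) is a welcome detail that the paper's one-line proof glosses over.
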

\begin{proof}
If one considers the runtime bounds available for $F$ by assumption, and for $\Phi_{\deg(P)+1}$ as above, the claim becomes a consequence of Theorem \ref{theo:growth}.
\end{proof}

\begin{corollary}
\label{corr:functionspaces}
There cannot be an exponential in the category of $\reg$-represented spaces and polynomial-time computable functions.
\end{corollary}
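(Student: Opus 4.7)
The plan is to derive the non-existence of exponentials from Theorem \ref{theo:functionspaces} by using the universal property to extract, for each $n$, a ``code'' for $\Phi_n$ as a $\reg$-name. Suppose towards contradiction that the category admits exponentials, and consider in particular the exponential $\mathbf{R}^\mathbf{R}$ of the represented space $\mathbf{R} := (\reg, \id)$ with itself. Its evaluation $\mathrm{ev}\colon \mathbf{R}^\mathbf{R} \times \mathbf{R} \to \mathbf{R}$ is polynomial-time computable by definition of an exponential in this category, hence admits a polynomial-time realizer $E\colon \reg \times \reg \to \reg$; let $P$ be a second-order polynomial witnessing its runtime.

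Next I would invoke the universal property of the exponential. Each $\Phi_n$ is polynomial-time computable as a map $\reg \to \reg$ (a straightforward induction on $n$ gives a bound of the shape $P_n(|\varphi|)(|w|)$), so it constitutes a morphism $\mathbf{R} \to \mathbf{R}$. Currying the composite $\mathbf{1} \times \mathbf{R} \to \mathbf{R}$ yields a morphism $\mathbf{1} \to \mathbf{R}^\mathbf{R}$, and hence a point $[\Phi_n] \in \mathbf{R}^\mathbf{R}$; pick any $\delta_{\mathbf{R}^\mathbf{R}}$-name $\psi_n \in \reg$ of it. Because $\delta_\mathbf{R} = \id$, the realizer condition for $E$ specializes to the genuine equality $E(\psi_n, \varphi) = \Phi_n(\varphi)$ for every $\varphi \in \reg$, rather than mere equality up to $\delta_\mathbf{R}$-equivalence.

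Finally I instantiate $n := \deg(P) + 1$. The resulting $\psi_n \in \reg$ satisfies $E(\psi_n, \varphi) = \Phi_{\deg(P) + 1}(\varphi)$ for all $\varphi$, which directly contradicts Theorem \ref{theo:functionspaces} applied to the polynomial-time computable $E$ with its bound $P$. Hence no exponential of $\mathbf{R}$ with itself can exist, and so the category fails to be cartesian closed.

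The main (mild) obstacle is a bookkeeping one: ensuring that the identity-represented object $\mathbf{R}$ really lies in the category under consideration, and that the universal property delivers an honest $\reg$-name whose $E$-evaluation equals $\Phi_n$ on the nose rather than merely a $\delta$-equivalent output. Both points are immediate once one chooses the identity representation for the codomain, so no genuinely new combinatorial content is required beyond Theorem \ref{theo:functionspaces}.
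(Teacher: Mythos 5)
Your proposal is correct and is essentially the paper's own argument: the paper's one-line proof ("any realizer of the evaluation operation would violate Theorem \ref{theo:functionspaces}") is exactly your construction, which you merely spell out by taking the exponential of $(\reg,\id)$ with itself, currying the polynomial-time computable $\Phi_{\deg(P)+1}$ to obtain a name $\psi$, and feeding the resulting identity $E(\psi,\varphi)=\Phi_{\deg(P)+1}(\varphi)$ into Theorem \ref{theo:functionspaces}. The bookkeeping points you flag (identity representation, names versus points) are handled exactly as you suggest, so no further content is needed.
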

\begin{proof}
Any realizer of the evaluation operation would violate Theorem \ref{theo:functionspaces}.
\end{proof}
\section{Clocked Type-Two machines}
Despite the negative result above, we can identify spaces of functions with some of the desired properties of exponentials. The required technical tool is a type-two version of clocked Turing machines. We pick a Universal Turing Machine (UTM) $M$ which simulates efficiently, meaning that on input $n, \varphi, w$ the time $M$ needs to compute the output of the $n$th Oracle Turing machine on input $w$ with oracle $\varphi$ is bounded by a quadratic polynomial in $n$ and the time $T$ needed by the $n$th Turing machine itself to compute the output on $w$ with oracle $\varphi$ (\footnote{A straight-forward adaption of the classical result by \name{Hennie} and \name{Stearns} \cite{hennie} provides the existence of such a universal machine.}). Then $M$ is extended by a clock evaluating the standard second-order polynomial\footnote{More generally, we could use an arbitrary time-constructible function in place of $P_m$. That $P_m$ actually is time-constructible is witnessed by $\Phi_m$.} $P_m$ on $|\langle n, \varphi\rangle|, |w|^l$ for fixed $m$ and some $l \in \mathbb{N}$ encoded as $(x \mapsto x^l) \in \mon$ and aborts the computation of $M$ once the runtime exceeds the value of $P_m$. Denote the resulting machine with $M^{T=P_m}$. The runtime of $M^{T=P_m}$ can be bounded by $KP_{m+1}^2 + K$ for some constant $K \in \mathbb{N}$. In particular we find that the second-order degree of the runtime of $M^{T=P_m}$ is $m + 1$.

\begin{theorem}
\label{theo:eval}
For any partial function $f : \subseteq \reg \to \reg$ computable in polynomial time $P$ with $\deg(P) \leq m$ there are some $\psi \in \reg$, $n, l \in \mathbb{N}$ such that for any $\varphi \in \dom(f)$ we find $f(\varphi) = M^{T=P_m}(\langle n, \langle \varphi, \psi\rangle, x^l\rangle)$.
\end{theorem}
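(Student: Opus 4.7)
Given $f$, fix an oracle machine $M_f$ realising $f$ within $P(|\varphi|)(|w|)$ steps. The plan is to build a wrapper $M'$ of $M_f$ whose G\"odel index $n$, together with suitably chosen $\psi$ and $l$, forces the $P_m$-clock of the UTM to be generous enough to accommodate the whole simulation. Concretely, $M'$ is the oracle machine that on oracle $\alpha$ and input $w$ interprets $\alpha$ as the pairing $\langle \varphi, \psi\rangle$, projects out $\varphi$, and simulates $M_f$ on $w$ with oracle $\varphi$. Each oracle query of $M_f$ costs one query to $\alpha$ followed by a linear decoding of the paired output, so the combined runtime of $M$ simulating $M'$ on $\langle n, \langle \varphi, \psi\rangle, w\rangle$ is polynomially bounded in $|n|$, $P(|\varphi|)(|w|)$, and $|\psi|(|w|)$, with all overhead factors arising as first-order polynomials of type $\mathbf{m}$.

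By Theorem~\ref{theo:growth}, $\deg P \leq m$ yields $q \in \mon$ and $l_0 \in \mathbb{N}$ with $P(p)(k) \leq P_m(p \times q)((k+1)^{l_0})$ for all $p, k$. Since the combined runtime of $M$ simulating $M'$ differs from $P$ only by multiplication with type-$\mathbf{m}$ first-order polynomials, its second-order degree remains at most $m$, so a second invocation of Theorem~\ref{theo:growth} provides analogous witnesses $q' \in \mon$ and $l_1 \in \mathbb{N}$ bounding the \emph{combined} runtime by $P_m(|\varphi| \times q')((|w|+1)^{l_1})$. I then pick $\psi \in \reg$ with $|\psi|(k) \geq q'(k)$ (for instance $\psi(u) := 0^{q'(|u|)}$) and set $l := 2 l_1$, so that $|\langle n, \langle \varphi, \psi\rangle\rangle|(k) = |n| \cdot |\varphi|(k) \cdot |\psi|(k) \geq (|\varphi| \times q')(k)$ while $|w|^l \geq (|w|+1)^{l_1}$ for $|w| \geq 2$; the two boundary cases $|w| \in \{0,1\}$ are absorbed by further enlarging $|\psi|$ on short arguments.

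Monotonicity of $P_m$ in both of its arguments now gives
\[
\text{runtime of $M$ on $\langle n, \langle \varphi, \psi\rangle, w\rangle$} \;\leq\; P_m(|\varphi| \times q')((|w|+1)^{l_1}) \;\leq\; P_m(|\langle n, \langle \varphi, \psi\rangle\rangle|)(|w|^l),
\]
which is exactly the clock value at which $M^{T=P_m}$ aborts, so no abort occurs and the output agrees with $f(\varphi)(w)$ for every $\varphi \in \dom f$ and every input $w$. The main technical obstacle is in the middle paragraph: one must verify that the wrapper decoding and the UTM simulation introduce only overhead whose second-order degree does not exceed that of $P$ itself, so that Theorem~\ref{theo:growth} can be reused to produce a single pair $(q', l_1)$ that absorbs everything into the free parameters $\psi$ and $l$.
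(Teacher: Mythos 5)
Your skeleton (wrapper machine with index $n$, padding oracle $\psi$, enlarged exponent $l$) matches the intended construction, but the pivotal claim of your middle paragraph is a genuine gap rather than a deferred detail. You assert that the \emph{combined} runtime of the universal machine $M$ simulating the wrapper differs from $P$ ``only by multiplication with type-$\mathbf{m}$ first-order polynomials'', so that its second-order degree stays at most $m$ and Theorem~\ref{theo:growth} can be invoked a second time. This fails on two counts. First, $M$ is only guaranteed to simulate with overhead \emph{quadratic} in the simulated time, and by Definition~\ref{def:degree} multiplying a type-$\mathbf{a}$ polynomial of degree $m$ by itself yields degree $m+1$; already for $P=\lit{L}(\lit{n})$ and $m=1$ the simulation cost behaves like $p(k)^2$, which is not in $\mathcal{O}^2(P_1)$ (take $p(x)=M(x+1)$ with $M$ large), so Theorem~\ref{theo:growth} asserts the \emph{opposite} of the bound you need. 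Second, each oracle query $u$ of the wrapper returns $\langle\varphi(u),\psi(u)\rangle$, of length $|\varphi|(|u|)\cdot|\psi|(|u|)$ with $|u|$ as large as $P(|\varphi|)(|w|)$; so the decoding cost contributes a factor behaving like $\lit{L}(P)$ (again degree $m+1$) and, worse, depends on $|\psi|$ at the query lengths --- not, as you write, at $|w|$ --- so the runtime bound you want to feed into Theorem~\ref{theo:growth} already contains the $q'$ you are about to extract from it; your argument never breaks this circularity. Hence the displayed chain ``runtime of $M$ $\leq P_m(|\langle n,\langle\varphi,\psi\rangle\rangle|)(|w|^l)$'' is not established, and indeed cannot be established along these lines.

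The statement is nevertheless provable because the $P_m$-clock of $M^{T=P_m}$ only has to dominate the \emph{step count of the simulated machine} (the wrapper indexed by $n$), not the total running time of $M$ including simulation overhead; that overhead is paid outside the clock and resurfaces only in the global bound $KP_{m+1}^2+K$ on the runtime of $M^{T=P_m}$. For this to work the wrapper must be arranged so that extracting the $\varphi$-component costs only a constant factor per symbol actually read (lazy decoding through the pairing), so that its step count is a constant multiple of $P(|\varphi|)(|w|)$; then a \emph{single} application of Lemma~\ref{lemma:degreeuniversal} to $P$, absorbing that constant and the $(|w|+1)$-versus-$|w|$ discrepancy into $q=|\psi|$ and $l$, gives the claim --- this is exactly the paper's one-line proof. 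Your proof instead tries to fit the entire universal simulation under the clock, which is both unnecessary and, by the degree analysis above, impossible; as a side remark, your treatment of $|w|\in\{0,1\}$ by ``enlarging $|\psi|$ on short arguments'' is also doubtful, since for small clock arguments the allowance applies $p$ at too few or too small points to be repaired by $q$ alone.
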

\begin{proof}\omitted{
Pick some $\psi \in \reg$, $l \in \mathbb{N}$ such that $|\psi| \in \mon$, $l$ satisfy the criterion in Lemma \ref{lemma:degreeuniversal}, and some $n$ that is an index of the machine computing $f$ in time $P$. The former guarantees that the clock of $M^{T=P_m}$ does not abort the computation on valid input; its underlying universal Turing machine then works as intended.}
\end{proof}

Based on the preceding theorem, we see that rather than a single function space, we obtain a family of function spaces indexed by a natural number corresponding to the second-order degree. Given two $\reg$-represented spaces $\mathbf{X}$, $\mathbf{Y}$ we define the function space $\mathcal{C}^{T=P_m}(\mathbf{X}, \mathbf{Y})$ by letting $\langle n, \psi, x^l\rangle \in \reg$ be a name for $f : \mathbf{X} \to \mathbf{Y}$ if $\varphi \mapsto M^{T=P_m}(\langle n,\langle \varphi, \psi\rangle,x^l\rangle)$ is a realizer of $f$. This definition just enforces that $\textrm{Eval} : \mathcal{C}^{T=P_m}(\mathbf{X}, \mathbf{Y}) \times \mathbf{X} \to \mathbf{Y}$ is computable with polynomial time bound $KP_{m+1}^2 + K$.

We can then reformulate Theorem \ref{theo:functionspaces} as $\mathcal{C}^{T=P_m}(\reg, \reg) \subsetneq \mathcal{C}^{T=P_{m+1}}(\reg, \allowbreak \reg)$ and Theorem \ref{theo:eval} as $f \in \mathcal{C}^{T=P_m}(\mathbf{X}, \mathbf{Y})$ for any $f : \mathbf{X} \to \mathbf{Y}$ computable in a polynomial time-bound of $\deg \leq m$. We can easily obtain an even stronger version of the latter by adapting the proof:
\begin{corollary}
\label{corr:functionsnames}
For a function $f : \mathbf{X} \to \mathbf{Y}$ the following properties are equivalent:
\begin{enumerate}
\item $f$ is computable in polynomial time $P$ with $\deg(P) \leq m$.
\item $f \in \mathcal{C}^{T=P_m}(\mathbf{X}, \mathbf{Y})$ has a polynomial time computable name.
\end{enumerate}
\end{corollary}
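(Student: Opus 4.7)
For (1) $\Rightarrow$ (2), I would re-examine the (omitted) proof of Theorem \ref{theo:eval}: it produces $\psi, n, l$ with $n$ an index of the machine computing $f$ in time $P$, and $\psi$ chosen so that $|\psi| \in \mon$ (together with $l$) witnesses the asymptotic criterion of Lemma \ref{lemma:degreeuniversal} for $P$ versus $P_m$. Since this witness can always be realized by a concrete regular function of polynomial size (for example, $\psi$ of size $s \mapsto s^c$ for a sufficiently large constant $c$), we may take $\psi$ to be polynomial-time computable. The pairing $\langle n, \psi, x^l \rangle$ is then polynomial-time computable, since $n$ and $l$ are constants and the pairing function is polynomial-time.

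For (2) $\Rightarrow$ (1), suppose $\langle n, \psi, x^l \rangle$ is a polynomial-time computable name for $f$ in $\mathcal{C}^{T=P_m}(\mathbf{X}, \mathbf{Y})$. Then $\psi$ is polynomial-time computable, so $|\psi|$ is dominated by some first-order polynomial $q$. I would construct a realizer of $f$ by directly simulating the $n$-th oracle machine with oracle $\langle \varphi, \psi \rangle$ on the suitably $l$-padded input, answering each query to $\psi$ by invoking its polynomial-time algorithm instead of consulting an oracle. Validity of the name in $\mathcal{C}^{T=P_m}$ guarantees that the $n$-th machine completes within the clock $P_m$, and since $|\psi|$ is absorbed into $q$, the resulting runtime is bounded by an expression of the form $P_m(|\varphi| \times q)((|w|+1)^l)$, which by Theorem \ref{theo:growth} is a second-order polynomial of degree $\leq m$ in $|\varphi|$ and $|w|$.

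The main obstacle is the apparent degree drop from $m+1$ (the stated degree of the generic evaluator $M^{T=P_m}$) down to $m$. To handle it, I would bypass the generic universal-machine runtime bound $K P_{m+1}^2 + K$ and argue directly that when $\psi$ is polynomial-time, each simulated query to $\langle \varphi, \psi \rangle$ costs only a first-order polynomial in $|\varphi|$ plus a single query to $\varphi$, and that these first-order polynomial overheads multiply into the clock bound $P_m$ without raising its second-order degree. The clean way to formalize this is through Theorem \ref{theo:growth}: any second-order polynomial obtained from $P_m$ by substituting $p \mapsto p \times q$ with $q$ first-order, and then composed with further first-order polynomial factors, remains of degree $\leq m$.
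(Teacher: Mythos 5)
Your direction (1)$\Rightarrow$(2) is correct and is exactly the adaptation of Theorem \ref{theo:eval} the paper has in mind: the witness $q$ arising in (the proof of) Lemma \ref{lemma:degreeuniversal} is a fixed first-order polynomial, so one can take a polynomial-time computable $\psi$ with $|\psi| = q$, and since $n$ and $l$ are constants the name $\langle n, \psi, x^l\rangle$ is polynomial-time computable.

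The gap is in (2)$\Rightarrow$(1), at precisely the point you single out as the main obstacle. The principle you invoke to close it --- that a bound obtained from $P_m$ by substituting $p \mapsto p \times q$ and then ``composed with further first-order polynomial factors'' still has degree $\leq m$ --- is false, and it is contradicted by the paper's own machinery: by Definition \ref{def:degree} an intermittently applied first-order polynomial of degree $\geq 2$ raises the degree (e.g.\ $\deg(\lit L((\lit L)^2)) = 3$), and Lemma \ref{lem:pgrowth} / Corollary \ref{corr:pgrowth} say exactly that such an outer composition cannot be absorbed. Already $(\lit L(\lit n))^2 \notin \mathcal{O}^2(P_1)$: for any $q,k$ take $n = 1$ and $p$ with $p(1) = H$ huge and minimal growth above, so that $p(1)^2 = H^2 > (H + 2^k - 1)\, q(2^k)$. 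What Definition \ref{def:asymptotic} and Theorem \ref{theo:growth} license is only a change of the innermost arguments, $p \mapsto p \times q$ and $k \mapsto (k+1)^c$; multiplying or composing on the outside is not allowed. Concretely, in your simulation the cost of answering a query $u$ that the $n$-th machine puts to the $\psi$-component is $s(|u|)$, where $s$ is the (fixed, possibly superlinear) time bound of the algorithm computing $\psi$; the simulated machine may spend most of its budget writing a query of length comparable to the clock value $T \approx P_m(|\varphi| \times q)(|w|^l)$ and then read a symbol of $\psi(u)$, so the overhead is of order $s(T)$ (and a universal-machine simulation would contribute a further factor of $T$), which is a first-order polynomial applied on the outside of a degree-$m$ quantity and hence of degree $m+1$ in general --- the same phenomenon that makes the runtime of $M^{T=P_m}$ itself have degree $m+1$ rather than $m$. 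So the claimed runtime bound of the form $P_m(|\varphi| \times q)((|w|+1)^l)$ is not justified as written: you would need to argue that the simulation has only constant-factor overhead per simulated step and in particular that all $\psi$-queries can be answered within a budget linear in $T$, and neither follows from the general degree principle you state.
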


As a name for a function in $\mathcal{C}^{T=P_m}(\mathbf{X}, \mathbf{Y})$ contains enough information to actually evaluate it, we immediately obtain the following connection to be computability-theoretic setting:

\begin{observation}
$\id : \mathcal{C}^{T=P_m}(\mathbf{X}, \mathbf{Y}) \to \mathcal{C}(\mathbf{X},\mathbf{Y})$ is computable.
\end{observation}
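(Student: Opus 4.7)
The plan is to exhibit a computable translation from $\mathcal{C}^{T=P_m}(\mathbf{X},\mathbf{Y})$-names to $\mathcal{C}(\mathbf{X},\mathbf{Y})$-names of the same function. The guiding observation, already flagged in the sentence preceding the statement, is that a $\mathcal{C}^{T=P_m}$-name $\langle n, \psi, x^l\rangle$ already directly describes an oracle Turing machine realizing $f$: namely the one that, on oracle $\varphi$ and input $u$, returns $M^{T=P_m}(\langle n, \langle \varphi, \psi\rangle, x^l\rangle)(u)$. All that remains is to repackage this description into the format expected by the canonical representation of $\mathcal{C}(\mathbf{X},\mathbf{Y})$.

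Concretely, I would first fix the representation of $\mathcal{C}(\mathbf{X},\mathbf{Y})$ as the usual one in which a name is a pair $\langle n', \psi'\rangle$, with $n'$ an oracle machine index and $\psi' \in \reg$ auxiliary data, such that the $n'$-th oracle machine with oracle $\langle \varphi, \psi'\rangle$ realizes $f$ (now without any clock). From the numerical parameters $n$ and $l$ read off the input $\mathcal{C}^{T=P_m}$-name, a standard $s_n^m$-style construction for oracle Turing machines produces an index $n'$ of a machine which, on oracle $\langle \varphi, \psi\rangle$, assembles $\langle n, \langle \varphi, \psi\rangle, x^l\rangle$ using the polynomial-time pairing functions (together with the hardcoded constants $n$ and $l$) and then simulates $M^{T=P_m}$ on this data. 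The output is the pair $\langle n', \psi\rangle$, in which $\psi$ is forwarded unchanged. Correctness is immediate from Theorem \ref{theo:eval} and the defining property of $\mathcal{C}^{T=P_m}$-names: whenever $\langle n, \psi, x^l\rangle$ validly names $f$, the constructed machine with oracle $\langle \varphi, \psi\rangle$ computes $f$'s output on $\varphi$.

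There is no substantive obstacle: the map $(n, l) \mapsto n'$ is effective by the $s_n^m$-theorem for oracle machines, $\psi$ is simply copied, and the clock $P_m$ built into $M^{T=P_m}$ is "discarded" only in the sense that we no longer claim the resulting realizer runs in polynomial time — which is precisely the content of passing to $\mathcal{C}$. In fact the translation is polynomial-time computable, so one obtains a very slightly stronger statement than the observation asks for; the step that requires the most attention is merely aligning the index-transformation with whatever concrete conventions are in force for the representation of $\mathcal{C}(\mathbf{X},\mathbf{Y})$.
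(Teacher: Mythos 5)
Your proposal is correct and is essentially the paper's own (implicit) argument: the paper justifies the observation with the single remark that a $\mathcal{C}^{T=P_m}(\mathbf{X},\mathbf{Y})$-name already contains enough information to evaluate the function, and your explicit repackaging of $\langle n, \psi, x^l\rangle$ into a realizer-style name for $\mathcal{C}(\mathbf{X},\mathbf{Y})$ via an index transformation (with $\psi$ passed through and the clock simply no longer asserted to be a polynomial bound) is exactly that idea spelled out, modulo the harmless convention-matching for the standard representation of $\mathcal{C}(\mathbf{X},\mathbf{Y})$ that you already note.
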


Obtaining oracles allowing function evaluation within given time-bounds is not necessarily computable, even if the existence of such bounds is promised:

\begin{proposition}
There is a $\reg$-represented space $\mathbf{X}$ and a function $H : \mathbf{X} \to \reg$ such that $H \in \mathcal{C}^{T=P_{m+1}}(\mathbf{X}, \reg)$ has a polynomial-time computable name, $H \in \mathcal{C}^{T=P_{m}}(\mathbf{X}, \reg)$, but $H \in \mathcal{C}^{T=P_{m}}(\mathbf{X}, \reg)$ has no computable name.
\begin{proof}
Let $h : \mathbb{N} \to \{0,1\}$ be the Halting problem, and let $\mathbf{X} := \{\phi \in \reg \mid |\Phi_{m+1}(\phi)(v)| \mod 2 = h(|v|)\}$. Define $H : \mathbf{X} \to \reg$ via $H(\phi)(v) = h(|v|)$. On the one hand, $H$ is clearly linear-time reducible to $\Phi_{m+1}$, and as such has a polynomial-time computable name in $\mathcal{C}^{T=P_{m+1}}(\mathbf{X}, \reg)$. On the other hand, $H$ is clearly linear-time reducible to $h$, and as such is computable in linear time relative to an oracle -- thus $H \in \mathcal{C}^{T=P_{m}}(\mathbf{X}, \reg)$.

As the only restriction for membership in $\mathbf{X}$ is given via the values of $\Phi_{m+1}$, any function computable in second-order time $P_m$, even if equipped with a computable oracle, cannot solve $H$ by Theorem \ref{theo:functionspaces}.
\end{proof}
\end{proposition}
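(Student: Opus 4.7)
The plan is to construct $\mathbf{X}$ and $H$ so that evaluating $H$ forces one to access the halting problem $h$, either at the cost of computing $\Phi_{m+1}$ (affordable in time $P_{m+1}$ but ruled out in time $P_m$ by Theorem~\ref{theo:functionspaces}) or at the cost of carrying an uncomputable auxiliary oracle. Concretely, letting $h : \mathbb{N} \to \{0,1\}$ be the characteristic function of the Halting problem, I would set
\[\mathbf{X} := \{\phi \in \reg \mid |\Phi_{m+1}(\phi)(v)| \bmod 2 = h(|v|) \text{ for all } v\}, \qquad H(\phi)(v) := h(|v|).\]
One checks that $\mathbf{X}$ is nonempty via a stage-wise construction of $\phi$ whose iterated output lengths realize the parity sequence of $h$.

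For the first assertion, the map $\phi \mapsto (v \mapsto |\Phi_{m+1}(\phi)(v)| \bmod 2)$ agrees with $H$ on $\mathbf{X}$ and is computable in polynomial time of second-order degree $m+1$ (the cost is dominated by evaluating $\Phi_{m+1}$); Corollary~\ref{corr:functionsnames} then supplies a polynomial-time computable name witnessing $H \in \mathcal{C}^{T=P_{m+1}}(\mathbf{X},\reg)$.

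For the second assertion, I would fold $h$ directly into the auxiliary oracle $\psi$ of a name. The resulting realizer queries $\psi$ at an argument of length $|v|$ and outputs $h(|v|)$ in linear time, well within the $P_m$ budget, so $H \in \mathcal{C}^{T=P_m}(\mathbf{X},\reg)$. This $\psi$ is of course uncomputable.

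The main obstacle is the third assertion, that no computable $\psi$ yields a name in $\mathcal{C}^{T=P_m}(\mathbf{X},\reg)$. Assume one did. Since $\psi$ is computable we may absorb its queries into the algorithm, obtaining a machine $R$ of second-order time degree essentially $m$, with $\phi$ as its only oracle, satisfying $R(\phi)(v) = H(\phi)(v) = |\Phi_{m+1}(\phi)(v)| \bmod 2$ for every $\phi \in \mathbf{X}$. The delicate step is to invoke Theorem~\ref{theo:functionspaces}: because the defining constraint of $\mathbf{X}$ rigidly ties $H$ to $\Phi_{m+1}$, $R$ in effect produces the parity of $\Phi_{m+1}(\phi)(v)$ at degree $m$; one then promotes this to a time-$P_m$ simulation of $\Phi_{m+1}$ on all of $\reg$ (e.g., by padding an arbitrary $\phi \in \reg$ into $\mathbf{X}$ without disturbing the finitely many queries $R$ makes), contradicting Theorem~\ref{theo:functionspaces}. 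Making this padding argument precise is the heart of the proof.
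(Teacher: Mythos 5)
Your construction of $\mathbf{X}$ and $H$ and your treatment of the first two assertions coincide with the paper's (reduce $H$ to $\Phi_{m+1}$ for the polynomial-time computable name in $\mathcal{C}^{T=P_{m+1}}$, fold $h$ into the auxiliary oracle for membership in $\mathcal{C}^{T=P_m}$). The gap is in the third assertion, and it is not merely a matter of "making the padding precise": the contradiction you aim for is the wrong one. Theorem \ref{theo:functionspaces} rules out producing the \emph{string} $\Phi_{m+1}(\varphi)(w)$ within a degree-$m$ time bound, and its proof rests on the output being too long to write down; your machine $R$ only emits the single bit $|\Phi_{m+1}(\phi)(v)| \bmod 2$, so no "time-$P_m$ simulation of $\Phi_{m+1}$ on all of $\reg$" can be extracted from it, and no contradiction with Theorem \ref{theo:functionspaces} arises this way. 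Moreover, the padding step as described fails: you cannot in general move an arbitrary $\phi \in \reg$ into $\mathbf{X}$ "without disturbing the finitely many queries $R$ makes", because the queried values may already determine the whole iteration path $|\phi|^{(m+1)}(|u|)$ for small $u$ with the wrong parity relative to $h(|u|)$; and any completion into $\mathbf{X}$ encodes $h$ in its length function, so it cannot be produced computably. A telling internal symptom is that your argument for the third part never uses the uncomputability of $h$ (after folding in $\psi$ you appeal only to a complexity-theoretic impossibility); but Theorem \ref{theo:functionspaces} quantifies over \emph{all} oracles $\psi \in \reg$, so if such an argument worked it would equally exclude the non-computable name you built in the second assertion, contradicting your own second step. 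Note also that after absorbing a merely computable $\psi$ into the algorithm, $R$ is no longer a polynomial-time machine, so Theorem \ref{theo:functionspaces} does not even apply to it directly; what survives the absorption is only the $P_m$-bounded query horizon into $\phi$.

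The intended contradiction — which is what the paper's (admittedly terse) argument gestures at — is with the halting problem itself: since membership in $\mathbf{X}$ constrains only the values of $\Phi_{m+1}(\phi)$, and by the growth lemmas behind Theorem \ref{theo:functionspaces} a degree-$m$ time bound cannot force queries deep enough to pin down $|\phi|^{(m+1)}$, one can answer the machine's oracle queries to $\phi$ in a computable, adversarially chosen way (e.g.\ with sufficiently long answers so that every constrained iteration path escapes the queried region) that remains extendible to some element of $\mathbf{X}$. The machine's one-bit output on such answers is then computable from $|v|$ (machine, computable $\psi$, and computable query answers), yet correctness on $\mathbf{X}$ forces that bit to equal $h(|v|)$; hence $h$ would be computable, contradiction. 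So the essential missing idea is that the computability of the name must be cashed out into an algorithm for $h$, with the growth/query-horizon argument serving only to justify the consistency of the computable oracle answers with membership in $\mathbf{X}$ — not to contradict Theorem \ref{theo:functionspaces} directly.
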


\begin{corollary}
There is a $\reg$-represented space $\mathbf{X}$ such that the polynomial-time computable map $\id : \mathcal{C}^{T=P_m}(\mathbf{X}, \reg) \to \mathcal{C}^{T=P_{m+1}}(\mathbf{X}, \reg)$ is not computably invertible.
\end{corollary}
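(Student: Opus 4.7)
The plan is to invoke the preceding proposition directly and argue by contradiction. Let $\mathbf{X}$ and $H : \mathbf{X} \to \reg$ be as constructed there, so that $H$ admits a polynomial-time computable name $\pi$ when viewed as an element of $\mathcal{C}^{T=P_{m+1}}(\mathbf{X}, \reg)$, while $H$ genuinely lies in $\mathcal{C}^{T=P_{m}}(\mathbf{X}, \reg)$ yet has no computable name there. The existence of these two witnesses is precisely the content of the proposition, so I would spend no further effort on their construction.

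Next I would unpack what it means for $\id : \mathcal{C}^{T=P_m}(\mathbf{X}, \reg) \to \mathcal{C}^{T=P_{m+1}}(\mathbf{X}, \reg)$ to be computably invertible: it would provide a computable partial realizer $G : \subseteq \reg \to \reg$ which, given any $\mathcal{C}^{T=P_{m+1}}$-name of a function $f$ that happens to also lie in $\mathcal{C}^{T=P_m}(\mathbf{X}, \reg)$, returns some $\mathcal{C}^{T=P_m}$-name of the same $f$. Since $H$ lies in the smaller space, the function $G$ would be defined at $\pi$, and $G(\pi)$ would be a $\mathcal{C}^{T=P_m}$-name of $H$. Being obtained by applying a computable oracle machine to the computable oracle $\pi$, this name $G(\pi)$ is itself computable, contradicting the third clause of the proposition.

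I expect no real obstacle: all the hard work is encapsulated in the preceding proposition, and this corollary is simply the transfer of the statement ``$H$ has no computable $\mathcal{C}^{T=P_m}$-name'' into the statement ``the partial inverse identity is not computable''. The only routine fact used is the closure property that composing a computable realizer with a computable name yields a computable name, which is standard for represented spaces. It is worth noting in passing that the partial inverse is well-defined on the relevant domain because $\mathcal{C}^{T=P_m}(\mathbf{X}, \reg) \subseteq \mathcal{C}^{T=P_{m+1}}(\mathbf{X}, \reg)$ as sets of functions, an immediate consequence of Corollary~\ref{corr:functionsnames}.
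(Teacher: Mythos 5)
Your argument is correct and is exactly the derivation the paper intends: the corollary is stated without proof immediately after the proposition precisely because it follows by applying a hypothetical computable realizer of the partial inverse to the (polynomial-time, hence computable) $\mathcal{C}^{T=P_{m+1}}$-name of $H$, producing a computable $\mathcal{C}^{T=P_m}$-name of $H$ and contradicting the proposition's third clause. Your side remarks (well-definedness of the partial inverse on $H$ via the inclusion of the function spaces, and that computable realizers map computable names to computable names) are the right routine justifications.
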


We shall conclude this section by noting some nice closure properties of the slicewise polynomial-time function spaces:

\begin{proposition}
\label{prop:functionspacesbasics}
The following functions are polynomial-time computable:
\begin{enumerate}
\item $\operatorname{eval} : \mathcal{C}^{T=P_m}(\mathbf{X}, \mathbf{Y}) \times \mathbf{X} \to \mathbf{Y}$ defined by $\operatorname{eval}(f, x) = f(x)$.
\item $\operatorname{curry} : \mathcal{C}^{T=P_m}(\mathbf{X} \times \mathbf{Y}, \mathbf{Z}) \to \mathcal{C}^{T=P_m}(\mathbf{X}, \mathcal{C}^{T=P_m}(\mathbf{Y}, \mathbf{Z}))$ defined by $\operatorname{curry}(f) = x \mapsto (y \mapsto f(x, y))$.
\item $\operatorname{uncurry} : \mathcal{C}^{T=P_m}(\mathbf{X}, \mathcal{C}^{T=P_n}(\mathbf{Y}, \mathbf{Z})) \to \mathcal{C}^{T=P_{m+n}}(\mathbf{X} \times \mathbf{Y}, \mathbf{Z})$ defined by $\operatorname{uncurry}(f) = (x, y) \mapsto f(x)(y)$.
\item $\circ : \mathcal{C}^{T=P_m}(\mathbf{Y}, \mathbf{Z}) \times \mathcal{C}^{T=P_n}(\mathbf{X}, \mathbf{Y}) \to \mathcal{C}^{T=P_{n+m}}(\mathbf{X}, \mathbf{Z})$, the composition of functions
\item $\times : \mathcal{C}^{T=P_m}(\mathbf{X}, \mathbf{Y}) \times \mathcal{C}^{T=P_m}(\mathbf{U}, \mathbf{Z}) \to \mathcal{C}^{T=P_m}(\mathbf{X} \times \mathbf{U}, \mathbf{Y} \times \mathbf{Z})$
\item $\operatorname{const} : \mathbf{Y} \to \mathcal{C}^{T=P_m}(\mathbf{X}, \mathbf{Y})$ defined by $\operatorname{const}(y) = (x \mapsto y)$.
\end{enumerate}
\end{proposition}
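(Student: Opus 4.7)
The plan is to treat the six items uniformly: for each, exhibit an explicit universal-machine index $n'$ and auxiliary oracle $\psi'$ for the output function, then reduce the runtime check to a degree calculation on second-order polynomials via Theorem~\ref{theo:growth}. The polynomial-time clause for each operation splits into two parts: that the construction of the output triple from the input triples is polynomial-time (which will be obvious from the explicit form), and that the output triple indeed represents a member of the target function space, which is where the clock bound $P_{k+1}^2$ enters and some degree bookkeeping is needed.

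I would dispatch the easy cases first. The eval operation is precisely what the definition of $\mathcal{C}^{T=P_m}$ is engineered for: on input $\langle \langle n, \psi, x^l\rangle, \varphi\rangle$ simply invoke $M^{T=P_m}(\langle n, \langle \varphi, \psi\rangle, x^l\rangle)$. For const, I would map $\varphi_y$ to $\langle n_0, \varphi_y, x\rangle$, where $n_0$ is the fixed index of the machine that ignores its first oracle argument and copies its auxiliary oracle to the output; the resulting runtime is linear in the query length. For $\times$, given names $\langle n_1, \psi_1, x^{l_1}\rangle$ and $\langle n_2, \psi_2, x^{l_2}\rangle$, output $\langle n', \langle \psi_1, \psi_2\rangle, x^{\max(l_1, l_2)}\rangle$, where $n'$ is a fixed index that splits its input pair, simulates the two component machines in parallel, and pairs the outputs.

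For curry and composition the recipe is similar, with a little more bookkeeping. Given $\langle n, \psi, x^l\rangle$ naming $f : \mathbf{X} \times \mathbf{Y} \to \mathbf{Z}$, the curry realizer maps $\varphi_x$ to a triple whose underlying machine, when run through $M^{T=P_m}$, outputs the inner triple $\langle n'', \langle \varphi_x, \psi\rangle, x^l\rangle$; here $n''$ is the fixed index of a machine that, given oracle $\langle \varphi_y, \langle \varphi_x, \psi\rangle\rangle$ and input $w$, re-packages and simulates machine $n$ on oracle $\langle \langle \varphi_x, \varphi_y\rangle, \psi\rangle$ with input $w$. Composition is analogous: I would build a single name whose underlying machine first runs the $M^{T=P_n}$-simulation of $f$, then pipes the result into the $M^{T=P_m}$-simulation of $g$. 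The key identity $P_n(p)(P_m(p)(k)) = P_{m+n}(p)(k)$ shows the composite runtime has second-order degree at most $m+n+1$, so by Theorem~\ref{theo:growth} it lies in $\mathcal{O}^2(P_{m+n+1})$ and, after absorbing polynomial slack into the new auxiliary oracle and exponent, within the clock $P_{m+n+1}^2$ of the target space.

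The main obstacle is uncurry, as it is the place where the two levels of universal simulation are genuinely nested and the slice degree must jump from $m$ (or $n$) to $m+n$. Given $\langle n, \psi, x^l\rangle$ naming $f : \mathbf{X} \to \mathcal{C}^{T=P_n}(\mathbf{Y}, \mathbf{Z})$, my plan is to build a fixed index $n'$ for a machine that, on oracle $\langle \varphi_x, \varphi_y\rangle$ and input $w$, first invokes $M^{T=P_m}$ with auxiliary oracle $\psi$ to read out the triple $\langle n_f, \psi_f, x^{l_f}\rangle$ naming $f(x)$, and then invokes $M^{T=P_n}$ on $\langle n_f, \langle \varphi_y, \psi_f\rangle, x^{l_f}\rangle$ with input $w$. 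The heart of the verification is to bound the substitution of the degree-$(m+1)$ runtime of stage (i) into the input argument of the degree-$(n+1)$ runtime of stage (ii); using $P_n(p)(P_m(p)(k)) = P_{m+n}(p)(k)$ and collecting constants, the total runtime has degree at most $m+n+1$, which by Theorem~\ref{theo:growth} sits inside $\mathcal{O}^2(P_{m+n+1})$ and so, upon choosing a sufficiently large new exponent $l'$, is bounded by the clock $P_{m+n+1}^2$ that certifies membership in $\mathcal{C}^{T=P_{m+n}}(\mathbf{X}\times\mathbf{Y}, \mathbf{Z})$.
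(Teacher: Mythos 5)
Your overall plan (explicit indices plus padding oracles, then a runtime estimate) is exactly the kind of argument the paper gestures at -- its proof is only the one-line remark that these are standard machine constructions plus a runtime analysis -- but the quantitative core of your verification for items 3 and 4 has a genuine gap. The identity $P_n(p)(P_m(p)(k)) = P_{m+n}(p)(k)$ only applies when the \emph{same} size function $p$ feeds both stages. In your nested simulation this is not the case: the second stage's clock (and runtime) is evaluated at the size of its own oracle, which contains the intermediate object produced by the first stage -- the value $h(\varphi_x)$ for composition, resp.\ the name triple $\langle n_f,\psi_f,x^{l_f}\rangle$ of $f(\varphi_x)$ for uncurry -- and the size of that object is bounded only by the first clock. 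So what you actually have to dominate is an expression of the shape $P_n\bigl(p \times P_m(p\times q)(\cdot^{l})\bigr)(|w|^{l''})$, not $P_n(p)(P_m(p)(k))$. Worse, the route you propose for closing this, ``compute the degree, apply Theorem~\ref{theo:growth}'', points the wrong way: by Definition~\ref{def:degree} the formal degree of such a nested expression (an $\lit L$-slot filled with a degree-$m$ polynomial multiplied by $p$, iterated $n$ times) exceeds $m+n$, so the degree calculus cannot certify membership under the $P_{m+n}$ clock. What is needed instead is a direct absorption estimate in the style of the chain of inequalities in the proof of Theorem~\ref{theo:epbcollapse}: choose the new auxiliary oracle $\psi'$ so that $|\psi'|$ dominates the fixed size functions involved, and use that every $q'\in\mon$ satisfies $q'(x)\geq x$ to swallow the multiplicative pairing factors into the $m+n$ iterations, yielding $P_n\bigl(p\times P_m(p\times q)(\cdot^{l})\bigr)(\cdot) \leq P_{m+n}(p\times q')(\cdot^{l'})$ for suitable fixed $q',l'$. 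Without this step your proof of uncurry and composition does not go through.

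A second, smaller but recurring problem is that you conflate the clock with the evaluation-time bound. The clock of $M^{T=P_k}$ is $P_k$ evaluated at the size data of the name; $KP_{k+1}^2+K$ is merely an upper bound on the total running time of the clocked evaluator. Membership of $\operatorname{uncurry}(f)$ in $\mathcal{C}^{T=P_{m+n}}(\mathbf{X}\times\mathbf{Y},\mathbf{Z})$ requires that the machine you construct finishes (and hence is not aborted) within the $P_{m+n}$ clock evaluated at the padded name; showing that its runtime ``has degree at most $m+n+1$'' or lies below $P_{m+n+1}^2$ is not sufficient and is in fact the wrong target. The easy items (eval, const, $\times$, and with minor care curry -- note that the inner index must either be computed from $n$ by an s-m-n construction or $n$ must be passed through the auxiliary oracle and simulated universally, with the overhead absorbed by raising $l$) are fine and match the intended standard constructions.
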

\begin{proof}
All these results are obtained by standard constructions on Turing machines (as in \cite[Proposition 3.3]{pauly-synthetic}) coupled with a straight-forward analysis of the asymptotic runtime.
\end{proof}

Instead of fixing the second-order degree of the polynomial run-times, we could consider the function space $\coprod_{n \in \mathbb{N}} \mathcal{C}^{T=P_n}$ containing all polynomial-time computable functions. The items $2.-6.$ from Proposition \ref{prop:functionspacesbasics} immediately carry over as polynomial-time computable. However, evaluation no longer is polynomial-time computable (Corollary \ref{corr:functionspaces}).

\hide{TODO: comment on evaluation on $\coprod_{n \in \mathbb{N}} \mathcal{C}^{T=P_n}$ and parameterized complexity}

\section{Effectively polynomial-bounded spaces}
\label{sec:epb}
Our next goal is to investigate restrictions we can employ on $\mathbf{X}$ (and later on $\mathbf{Y}$) in order to force the collapse of the time hierarchy $\mathcal{C}^{T=P_m}(\mathbf{X}, \mathbf{Y}) \subseteq \mathcal{C}^{T=P_{m+1}}(\mathbf{X}, \mathbf{Y})$. The collapse will only occur at the second level, as this is the minimal level where a query to the second-order input may depend on the result of another such query, which is required in order to fully utilize the function-argument depending on the input-argument.

\begin{definition}
We call $\mathbf{X}$ effectively polynomially bounded (epb)\footnote{
Note that the epb-condition acts on the domain of the representation only, it does not relate to any hypothetical additional structure available on $\mathbf{X}$ (such as a metric). In particular, this condition is unrelated to the notion of a concise representation introduced by \name{Weihrauch} in \cite{weihrauchf}.}
, iff it admits a $\reg$-representation $\delta_\mathbf{X}$ such that there is a constant $c \in \mathbb{N}$ and a monotone polynomial $Q : \mathbb{N} \to \mathbb{N}$ s.t.: \[\forall \varphi \in \dom(\delta_\mathbf{X}) \ \forall i \in \mathbb{N} \ \ |\varphi|(i) \leq c|\varphi|(c)^cQ(i)\]
\end{definition}

\begin{theorem}
\label{theo:epbcollapse}
Let $\mathbf{X}$ be epb. Then for any $m \geq 2$ we find $\mathcal{C}^{T=P_2}(\mathbf{X}, \mathbf{Y}) \cong \mathcal{C}^{T=P_m}(\mathbf{X}, \mathbf{Y})$ where $\cong$ denotes polytime isomorphic.
\end{theorem}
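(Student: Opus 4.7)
The plan is to exhibit polytime-computable maps in both directions between the two $\reg$-representations of the common underlying set. The inclusion $\mathcal{C}^{T=P_2}(\mathbf{X},\mathbf{Y}) \hookrightarrow \mathcal{C}^{T=P_m}(\mathbf{X},\mathbf{Y})$ is witnessed by essentially the identity on names: a computation that fits within the $P_2$ clock certainly fits within the larger $P_m$ clock, so any name $\langle n,\psi,x^l\rangle$ valid in the former is automatically valid in the latter. This direction is trivially polytime and does not use the epb hypothesis.

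For the reverse direction, given $\chi = \langle n,\psi,x^l\rangle$ in $\mathcal{C}^{T=P_m}(\mathbf{X},\mathbf{Y})$, I would construct $\chi' = \langle n',\psi',x^{l'}\rangle$ in $\mathcal{C}^{T=P_2}(\mathbf{X},\mathbf{Y})$ so that: $n'$ is a fixed simulator index that extracts $(n,\psi,x^l)$ from $\psi'$ and runs $M^{T=P_m}$ on $\langle n,\langle\varphi,\psi\rangle,x^l\rangle$ using the $\varphi$-part of its own oracle; $\psi'$ packages $\psi$ (together with $n$ and the encoding of $l$) and is polynomially padded so that $|\psi'|$ has sufficient degree; and $l'$ is a constant depending only on $l$, $m$, and on the epb constants $c$ and $Q$ of $\mathbf{X}$. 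The map $\chi \mapsto \chi'$ is clearly polytime. Correctness reduces to a runtime calculation: the simulator does no more work than $M^{T=P_m}(\chi)$, whose running time is bounded (up to the quadratic overhead noted before Theorem \ref{theo:eval}) by $P_m(|\langle n,\langle\varphi,\psi\rangle,x^l\rangle|)(|u|)$. The epb inequality $|\varphi|(i) \leq c|\varphi|(c)^c Q(i)$ turns $|\varphi|$ into a genuine first-order polynomial in $i$ with leading coefficient polynomial in $A := |\varphi|(c)$; iterating this bound $m$ times yields that $P_m(|\varphi|)(|u|)$ is polynomial in $|u|$ with coefficients polynomial in $A$, i.e.~lies in $\mathcal{O}^2(P_2(|\varphi|\times q))$ on the epb domain, for a suitable $q\in\mon$ coming from $Q$. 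On the other side, the $P_2$-clock permits $|\chi'|(|\chi'|(|u|))$ steps, and since $|\chi'|(i) \geq |\varphi|(i)$ (hence $|\chi'|(c) \geq A$) and $|\chi'|$ is polynomially bounded in $i$ by construction, this quantity is itself polynomial in $|u|$ with coefficients polynomial in $A$. A sufficiently generous choice of $l'$ and of the degree of $|\psi'|$ makes the $P_2$-clock dominate the simulation time uniformly over all epb $\varphi$, so $M^{T=P_2}$ never aborts prematurely and returns the same output as $M^{T=P_m}(\chi)$.

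The main obstacle is the coefficient/exponent bookkeeping: one must pick $l'$ and the polynomial degree of $|\psi'|$ large enough to absorb, on the $P_2$-side, the degree blow-up produced by $m$ iterated applications of the epb bound on the $P_m$-side, while keeping $\psi'$ polynomially related to $\psi$ so that the whole map remains polytime. What makes this feasible is precisely the defining feature of the epb condition: once $|\varphi|$ is polynomial in its argument up to the scalar $c|\varphi|(c)^c$, the second-order polynomial hierarchy collapses on $\mathbf{X}$-inputs, and all nested oracle calls fold into a single multiplicative factor in $A$ that two nested applications of a polynomially padded oracle already reproduce. This also explains why the collapse halts at level $2$ rather than $1$: a single outer application $|\chi'|(|u|)$ cannot encode the dependence on $A = |\varphi|(c)$ needed for coefficients independent of $|u|$, whereas the nested $|\chi'|(|\chi'|(|u|))$ can.
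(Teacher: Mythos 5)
Your overall architecture (trivial inclusion one way; a fixed simulator index plus a padded oracle and an adjusted exponent the other way) is the same as the paper's, but the quantitative core of your reverse direction has a genuine gap: you apply the epb bound as if the quantity to be dominated were $P_m(|\varphi|)(|u|)$, whereas the clock of the machine being simulated is evaluated on the size of its \emph{entire} oracle, $|\langle n,\langle\varphi,\psi\rangle\rangle|$, which contains $|\psi|$ as a factor. The epb condition controls only $\varphi\in\dom(\delta_\mathbf{X})$; the name component $\psi$ is an arbitrary regular function and may grow as fast as you like. Hence the legitimate running time of $M^{T=P_m}$ on a valid name involves up to $m$ \emph{nested} applications of $|\psi|$, and after folding in the epb bound it is of the order $(cQ^c\times|\psi|)^{(m)}$ applied to a polynomial in $|u|$ (with one residual factor carrying $A=|\varphi|(c)$) --- not ``polynomial in $|u|$ with coefficients polynomial in $A$''. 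For the same reason your assertion that $|\chi'|$ is ``polynomially bounded in $i$ by construction'' cannot hold for a correct construction, since the new oracle must contain enough size to let a $P_2$-clock cover those nested $|\psi|$-applications.

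This is exactly where your prescription that $\psi'$ be ``polynomially padded'' and ``polynomially related to $\psi$'' fails: a $P_2$-clock yields only two nested applications of $|\varphi|\times|\psi|\times|\psi'|$, so if $|\psi'|$ is a single polynomial application of $|\psi|$ you obtain growth of type $|\psi|^{(2)}$ (up to polynomials), which cannot dominate the $|\psi|^{(m)}$-type simulation time already for $m=3$ and, say, $|\psi|(i)=2^i$. The repair is the one the paper uses: pad with $\psi'$ of size roughly the $m$-fold composition $(cQ^c\times|\psi|)^{(m)}$ (so that $|\langle\langle\varphi,\psi\rangle,\psi'\rangle|$ dominates $|\langle\varphi,\psi\rangle|\times(cQ^c\times|\psi|)^{(m)}$), which is still polynomial-time computable from $\psi$ because $m$, $c$ and $Q$ are fixed constants --- the point is that ``polytime'' here means bounded by a second-order polynomial in $|\psi|$, not by a first-order polynomial of one application of $|\psi|$. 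With that corrected choice of $\psi'$ (and $l$ replaced by $lc^m$ as in the paper), your remaining bookkeeping and the fixed-simulator formulation go through essentially as in the paper's proof.
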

\begin{proof}
It suffices to show only the direction $\subseteq : \mathcal{C}^{T=P_{m}}(\mathbf{X}, \mathbf{Y}) \to \mathcal{C}^{T=P_2}(\mathbf{X}, \mathbf{Y})$. Let $M$ be the UTM used in the definition of $\mathcal{C}^{T=P_{m}}(\mathbf{X}, \mathbf{Y})$, let $M'$ behave with the oracle $\langle \varphi, \langle \psi, \psi'\rangle\rangle$ in exactly the same way as $M$ does with $\langle \varphi, \psi\rangle$, and then finally, use $M'$ to define $\mathcal{C}^{T=P_2}(\mathbf{X}, \mathbf{Y})$.

The assumption that $\mathbf{X}$ is epb allows us to estimate:
$$\begin{array}{rcl} P_m(|\langle \varphi, \psi\rangle|)(k) & = & |\langle \varphi, \psi\rangle|(P_{m-1}(|\langle \varphi, \psi\rangle|) (k))\\
& \leq & c|\varphi|(c)^cQ(P_{m-1}(|\langle \varphi, \psi\rangle|)( k)) \times |\psi|(P_{m-1}(|\langle \varphi, \psi\rangle|)( k)) \\
& \leq & (cQ^c \times |\psi|)(P_{m-1}(|\langle \varphi, \psi\rangle|)( (k+1)^c)) \\
& \leq & (cQ^c \times |\psi|)\left ((cQ^c + |\psi|)(P_{m-2}(|\langle \varphi, \psi\rangle|)( (k+1)^{c^2}))\right ) \\
& \leq & (cQ^c \times |\psi|)^{(m)}(|\langle \varphi, \psi\rangle|((k+1)^{c^m})) \\
& \leq & P_2(|\langle \varphi, \psi\rangle| \times (cQ^c \times |\psi|)^{(m)})( (k+1)^{c^m})\end{array}$$
Now given $\psi$, we can compute some $\psi'$ with $|\langle \varphi, \psi\rangle| \times (cQ^c \times |\psi|)^{(m)} \leq |\langle \langle \varphi, \psi\rangle, \psi'\rangle|$ in polynomial time (note that $Q$, $c$ and $m$ are all constants here). The $l$ in the original name is replaced by $lc^m$.
\end{proof}

It is worthwhile pointing out that the function spaces for computability do not only contain the computable functions as elements, but comprise exactly the continuous functions as discussed very well in \cite{bauer}, yielding a structure dubbed \emph{category extension} in \cite{pauly-synthetic,paulysearchproblems}. This is due to the fact that the (partial) functions $f : \subseteq \Baire \to \Baire$ arising as sections of computable (partial) functions $F : \subseteq \Baire \times \Baire \to \Baire$ are just the continuous functions.

In a similar way, we shall investigate which functions appear in a space $\mathcal{C}^{T=P_2}(\mathbf{X}, \mathbf{Y})$ for epb $\mathbf{X}$. It turns out that (a modification of) uniform continuity plays a central role. A connection between run-time bounds and the modulus of continuity was also found for multivalued functions in \cite{paulyziegler}.

\begin{definition}
We call a partial function $f : \subseteq \reg \to \reg$ \emph{polytime-locally uniformly continuous}, if there is a polynomial-time computable function $\chi : \subseteq \reg \to \mathbb{N}$, such that $\dom(f) \subseteq \dom(\chi)$ and any $f|_{\chi^{-1}(\{n\})}$ is uniformly continuous.
\end{definition}

\begin{theorem}
Let $\mathbf{X} \subseteq \reg$ be epb. Then for $f : \mathbf{X} \to \reg$ the following are equivalent:
\begin{enumerate}
\item $f$ is polytime-locally uniformly continuous
\item $f \in \mathcal{C}^{T=P_2}(\mathbf{X}, \reg)$
\end{enumerate}
\end{theorem}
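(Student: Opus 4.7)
The plan is to prove the two directions separately, exploiting the fact that the epb condition effectively reduces the second-order runtime bound to a first-order bound once $|\varphi|(c)$ is fixed.

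For $(2) \Rightarrow (1)$, suppose $f \in \mathcal{C}^{T=P_2}(\mathbf{X}, \reg)$ is witnessed by some $\psi, n, l$, so that $f(\varphi)(w)$ is produced within time $K(P_3(|\langle \varphi, \psi\rangle|)(|w|^l))^2 + K$ by $M^{T=P_2}$. I would define $\chi : \mathbf{X} \to \mathbb{N}$ by $\chi(\varphi) := |\varphi|(c)$, where $c$ is the epb constant; this is clearly polynomial-time computable. Substituting the epb bound $|\varphi|(i) \leq c|\varphi|(c)^c Q(i)$ into the second-order runtime bound replaces every occurrence of $|\varphi|$ by an expression depending only on $|\varphi|(c)$, the argument, and fixed constants. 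Once $\chi(\varphi) = m$ is fixed the runtime is therefore bounded by some function $T_m(|w|)$ uniformly over the slice, and since no oracle query to $\varphi$ may be longer than the total runtime, $T_m$ serves as a modulus of uniform continuity for $f|_{\chi^{-1}(\{m\})}$.

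For $(1) \Rightarrow (2)$, the idea is to pack the nonuniform data about $f$ into a single fast-growing oracle $\psi$. For each $n$ fix a modulus $\mu_n$ for $f|_{\chi^{-1}(\{n\})}$, and let $h_{n,k}$ be the set-theoretic function that sends the restriction of any $\varphi \in \chi^{-1}(\{n\})$ to strings of length $\leq \mu_n(k)$ to $f(\varphi)$ evaluated on inputs of length $k$; this is well-defined by uniform continuity. I would build $\psi$ so that its responses encode both $\mu_n$ and $h_{n,k}$, with $|\psi|$ chosen to dominate the exponential factor $2^{\mu_n(k)}$ that will appear when we harvest a full prefix of $\varphi$. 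The algorithm on oracle $\langle \varphi, \psi\rangle$ and input $w$ then runs in four phases: compute $n = \chi(\varphi)$ (polynomial-time, collapsed to a $P_2$-bound by Theorem \ref{theo:epbcollapse}); query $\psi$ to obtain $\mu_n(|w|)$; query $\varphi$ on all strings of length $\leq \mu_n(|w|)$ to form a restriction $s$; query $\psi$ on $\langle n, w, s\rangle$ to read off $f(\varphi)(w) = h_{n,|w|}(s,w)$. A direct runtime check using $P_2(g)(k) = g(g(k))$ with $g = |\varphi| \cdot |\psi|$ confirms that all four phases fit inside $P_2(|\langle \varphi, \psi\rangle|)(|w|^l)$ provided $|\psi|(|w|^l) \geq 2^{\mu_n(|w|)}$ for every $n$ that can arise.

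The main obstacle is producing a single regular $\psi$ whose size dominates $2^{\mu_n(k)}$ for every relevant pair $(n,k)$, since the $\mu_n$ are given by (1) but may grow wildly with $n$ while $|\psi|$ is a single monotone function. The resolution uses the epb condition together with polynomial-time computability of $\chi$: these together give a polynomial $R$ with $\chi(\varphi) \leq R(|\varphi|(c))$ for all $\varphi \in \dom(\chi)$, so for each input length $k$ only finitely many values of $n$ can actually arise for $\varphi$ that the algorithm might encounter with $|w|\leq k$. This lets me define $|\psi|(k)$ as a finite maximum of $2^{\mu_n(k)}$ over that finite set, yielding a bona fide regular $\psi$; the exponent $l$ in the name $\langle n, \psi, x^l\rangle$ is then tuned to close the remaining gap between $|w|$ and the argument at which $|\psi|$ is queried.
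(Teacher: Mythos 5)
Your direction $(2)\Rightarrow(1)$ is essentially the paper's own argument (take $\chi(\varphi)=|\varphi|(c)$, substitute the epb bound into the $KP_3^2+K$ runtime so that for fixed name data $\psi,l$ the runtime, and hence all query lengths to $\varphi$, depend only on $|\varphi|(c)$ and $|w|$) and is fine. The problem is in $(1)\Rightarrow(2)$, where your resolution of the ``single regular $\psi$'' obstacle does not work. From $\chi(\varphi)\leq R(|\varphi|(c))$ it does \emph{not} follow that for each input length $k$ only finitely many values of $n=\chi(\varphi)$ can arise: the quantity $|\varphi|(c)$ is unbounded as $\varphi$ ranges over $\dom(\delta_{\mathbf X})$, and it is completely independent of the length of the string input $w$. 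The epb condition only controls the growth of $|\varphi|(i)$ in the argument $i$, not the magnitude of the ``constant'' $|\varphi|(c)$. So the set of $n$ over which you take your maximum is in general infinite, your definition of $|\psi|(k)$ is ill-formed, and the requirement $|\psi|(|w|^l)\geq 2^{\mu_n(|w|)}$ for all arising $n$ cannot be met by any single monotone $|\psi|$ when the $\mu_n$ grow with $n$ (which is exactly the situation the theorem is supposed to cover, cf.\ the example after the theorem where the modulus depends on the point).

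The repair is to let the argument at which the padding oracle is queried depend on $n$ itself, rather than trying to absorb all $n$ at the fixed argument $|w|^l$. This is what the paper does: choose $\Lambda\in\mon$ with $i\mapsto\Lambda(\langle n,i\rangle)$ a modulus for $f|_{\chi^{-1}(\{n\})}$, store the dependency table in $\psi$, and supply $\langle 2^\Lambda,\psi\rangle$ as the oracle; the time $2^{\Lambda(\langle\chi(\varphi),|u|\rangle)}$ needed to write the long query is then delivered by an oracle application at an argument that encodes $\chi(\varphi)$, at the price of a second-order time bound of some higher degree $m$. One then does not aim at $P_2$ directly at all, but invokes Corollary \ref{corr:functionsnames} to get $f\in\mathcal{C}^{T=P_m}(\mathbf{X},\reg)$ and Theorem \ref{theo:epbcollapse} to collapse to $\mathcal{C}^{T=P_2}(\mathbf{X},\reg)$. (Alternatively, one can exploit the nesting in the degree-$2$ bound: the outer application of $|\langle\varphi,\psi\rangle|$ is evaluated at an argument at least $|\varphi|(c)$, which does grow with $n$, so a finite-maximum definition of $|\psi|$ becomes possible there --- but your proposal, which places the whole burden on the inner application $|\psi|(|w|^l)$, does not achieve this.) As written, the step is a genuine gap, though the surrounding strategy (table oracle, exponential padding, epb collapse) is the right one.
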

\begin{proof}
\begin{description}
\item[$1. \Rightarrow 2.$] Given Theorem \ref{theo:epbcollapse} and Corollary \ref{corr:functionsnames}, it suffices to show that such an $f$ is polynomial-time computable relative to some oracle $\psi$. We start by some $\Lambda \in \mon$ such that $i \mapsto \Lambda(\langle n, i\rangle)$ is a modulus of continuity of $f|_{\chi^{-1}(\{n\})}$. Then $f(\varphi)(u)$ depends only on values $\varphi(w)$ with $|w| \leq \Lambda(\langle \chi(\varphi), |u|\rangle)$, and we may encode this dependency in some table $\psi$. In order to write the query to $\psi$, the machine needs time $2^{\Lambda(\langle \chi(\varphi), |u|\rangle)}$. By providing $\langle2^\Lambda, \psi\rangle$ as an oracle, this time is made available.
\item[$2. \Rightarrow 1.$] \omitted{By continuing the estimate from the proof in Theorem \ref{theo:epbcollapse} we obtain an upper bound for the evaluation of $f$ given its $\mathcal{C}^{T=P_2}(\mathbf{X}, \reg)$-name $\psi$ depending only on $\psi$, $l$ and $|\varphi(c)|$, but beyond that not on $\varphi$. In particular, for fixed $|\varphi(c)|$, there is a bound $\lambda : \mathbb{N} \to \mathbb{N}$ such that to compute $f(\varphi)(w)$, $\varphi$ is only queried on inputs $v$ with $|v| \leq \lambda(|w|)$ -- but this is uniform continuity. It is clear that $\varphi \mapsto |\varphi(c)|$ is a polynomial-time computable map.}
\end{description}
\end{proof}

Note that the same argument used for $1. \Rightarrow 2.$ in the preceding proof also establishes that $\mathcal{C}^{T=P_2}(\mathbb{R}, \mathbb{R})$ contains all the continuous functions, where $\mathbb{R}$ is represented as suggested in \cite{kawamura}, as observed by the first author in \cite{kawamura6}. In particular, $\mathbb{R}$ as defined there is an epb space -- and the best example of an epb space available to us.

\begin{observation}
If $\mathbf{X}$ and $\mathbf{Y}$ are epb, then so are $\mathbf{X} + \mathbf{Y}$ and $\mathbf{X} \times \mathbf{Y}$. Any subspace of an epb-space is epb itself. However, $\mathcal{C}^{T=P_2}(\mathbf{X}, \mathbf{Y})$ is not necessarily epb. If $\mathbf{X} \cong \mathbf{X}'$, we also cannot conclude that $\mathbf{X}'$ is epb, as $\mathbf{X}'$ may have superfluous fast-growing names\footnote{This aspect raises the question whether there is a convenient characterization of representations that are polynomial-time equivalent to an epb representation.}.
\end{observation}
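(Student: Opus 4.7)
The plan is to dispatch the three positive closure properties (sum, product, subspace) by explicit construction on the canonical representations, and to settle the two negative claims by exhibiting, in each case, a single valid name of super-polynomial growth.

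For $\mathbf{X}+\mathbf{Y}$ I would use the standard tagged sum representation whose tag costs only a constant in $|\varphi|$; taking $c = \max\{c_X,c_Y\}$ together with an appropriate combination of $Q_X$ and $Q_Y$ then carries the bound through. For the product I would use the pairing fixed in the paper, under which $|\langle\varphi,\psi\rangle|(i) = |\varphi|(i)\cdot|\psi|(i)$; multiplying the two component bounds and bundling $|\varphi|(c_X)^{c_X}|\psi|(c_Y)^{c_Y}$ into a single expression $|\langle\varphi,\psi\rangle|(c)^{c}$ with $c = \max\{c_X,c_Y\}$ yields the required form with $Q = Q_X Q_Y$. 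For a subspace nothing needs to be done beyond restricting the representation to names of elements of the subspace, since the epb inequality is pointwise in $\varphi$ and so survives on any subset of $\dom(\delta_\mathbf{X})$.

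For the negative claims I would work with the one-point space $\mathbf{1}$ whose (trivially epb) representation admits only the constant empty-string function as a name. The space $\mathcal{C}^{T=P_2}(\mathbf{1},\mathbf{1})$ then contains exactly one function, and any triple $\langle n,\psi,x^l\rangle$ whose machine produces the trivial $\mathbf{1}$-output is a valid name; in particular $\psi$ is essentially unconstrained, so any $\psi\in\reg$ with $|\psi|(i) = 2^i$ yields a name of super-polynomial growth which no pair of epb constants $(c,Q)$ can accommodate. For the isomorphism claim I would let $\mathbf{1}'$ be the one-point space whose representation accepts every $\varphi\in\reg$ as a name; then $\mathbf{1}\cong\mathbf{1}'$ via the identity in one direction and the constant map returning the canonical $\mathbf{1}$-name in the other, but the same super-polynomial $\varphi$ constructed above is now itself a $\mathbf{1}'$-name, destroying epb.

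The only slightly delicate step is the product calculation, where I need to verify that $|\varphi|(c_X)^{c_X}|\psi|(c_Y)^{c_Y} \leq |\langle\varphi,\psi\rangle|(c)^{c}$ for a single $c$; this reduces to the monotone inequalities $|\varphi|(c_X)\leq|\varphi|(c)$ and $|\psi|(c_Y)\leq|\psi|(c)$ once we take $c = \max\{c_X,c_Y\}$. Everything else is bookkeeping, and the two counterexamples are immediate once one observes that the epb condition amounts to a polynomial-growth constraint on the map $i\mapsto|\varphi|(i)$ (with a name-dependent leading constant), which is violated by any exponentially-growing $|\varphi|$.
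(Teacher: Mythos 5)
Your proposal is correct; the paper states this as an \emph{Observation} without any proof, and your constructions are exactly the natural justification one would supply: the tagged-sum and multiplicative-pairing bookkeeping (with $c=\max\{c_X,c_Y\}$ and a combined constant absorbing $c_Xc_Y$) for closure under $+$, $\times$ and subspaces, and the one-point space $\mathbf{1}$ for both counterexamples, where the essentially unconstrained oracle component $\psi$ of a $\mathcal{C}^{T=P_2}$-name (respectively the ``everything is a name'' representation $\mathbf{1}'$) yields valid names of exponential size, which no fixed pair $(c,Q)$ can bound. In short, the argument matches the paper's evident intent and I see no gaps.
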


\section{Padding and polytime admissibility}
In this section we shall explore two distinct but similar arguments based on using padding-like concepts on the codomain of a function in order to make time bounds irrelevant. This technique both reveals polynomial-time admissibility as a far too restrictive concept (as opposed to computable admissibility) and allows us to draw some conclusions about degree structures.

We define a $\reg$-representation $\pi$ of Cantor space via $\dom(\pi) = \{\varphi \in \reg \mid \operatorname{range}(|\varphi|) = \mathbb{N}\}$ and $\pi(\varphi)(i) = \varphi(0^n)(i)$ where $n = \min \{j \in \mathbb{N} \mid |\varphi(0^j)| = i\}$. Now any Cantor-representation $\delta$ can be turned into a $\reg$-representation by composing with $\pi$, and by this we obtain a strong correspondence between computability and polynomial-time computability.
\begin{proposition}
\label{prop:padding}
A function $f : \mathbf{X} \to (Y, \delta_\mathbf{Y})$ is computable if and only if $f : \mathbf{X} \to (Y, \delta_\mathbf{Y} \circ \pi)$ is polynomial-time computable.
\end{proposition}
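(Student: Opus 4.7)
The plan is to prove both directions separately, with the forward direction being the one that actually requires a construction.

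For the backward direction, the argument is straightforward: polynomial-time computability entails computability, and $\pi$ itself is computable, since to produce $\pi(\psi)(i)$ one simply searches for the least $n$ with $|\psi(0^n)| = i$ and reads off the relevant character. Postcomposing a computable $(\delta_\mathbf{Y} \circ \pi)$-realizer with $\pi$ therefore yields a computable realizer of $f : \mathbf{X} \to (Y, \delta_\mathbf{Y})$.

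For the forward direction, the key observation is that names in $\dom(\pi)$ may carry information arbitrarily slowly: to recover the $i$-th Cantor-space symbol, one consults $\psi(0^{n_i})$ for the least $n_i$ at which $|\psi|$ reaches the value~$i$, so the time budget available for producing this symbol scales as a polynomial in $n_i$ rather than in $i$. I would start from a computable (but possibly very slow) realizer $F$ of $f : \mathbf{X} \to (Y, \delta_\mathbf{Y})$ computed by some machine $M$, and construct $F'$ so that $F'(\varphi)(0^n)$ outputs, within a polynomial-in-$n$ time budget, a prefix of $F(\varphi)$ whose length is dictated by a slowly growing size function, arranged so that $\pi(F'(\varphi)) = F(\varphi)$. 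Since $F$ is computable, every prefix length of $F(\varphi)$ is eventually attainable, so the size function can be made unbounded and $\pi(F'(\varphi))$ well-defined, giving a $\delta_\mathbf{Y}$-name of $f(\delta_\mathbf{X}(\varphi))$.

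The main obstacle is reconciling the required structure of $|F'(\varphi)|$ (membership in $\mon$ with range~$\mathbb{N}$) with the computational lag of $M$: the natural choice of size function---namely ``the maximum prefix of $F(\varphi)$ computable in $n$ steps''---is non-decreasing but neither strictly monotone nor necessarily surjective, and must be adjusted via a padding convention to satisfy the formal requirements of $\dom(\pi)$ while preserving the information that $\pi$ subsequently extracts. Once this bookkeeping is in place, the polynomial-time bound on $F'$ follows directly from the choice of time budget (each invocation runs $M$ for a polynomial-in-$n$ number of steps and writes out a linear-length output), and correctness follows from the correctness of $M$.
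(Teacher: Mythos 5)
Your proposal matches the paper's own (very terse) proof: one direction follows from the computability of $\pi$, the other from the fact that a machine producing a $(\delta_\mathbf{Y}\circ\pi)$-name may delay its output---emitting on inputs of length $n$ only as much of the Cantor name as a polynomial-in-$n$ simulation budget of the slow realizer allows---so your construction is exactly the intended one, and the bookkeeping you flag is routine. One small caution: the requirement on $|F'(\varphi)|$ coming from $\dom(\pi)$ is only that its range be all of $\mathbb{N}$ (the tacit restriction to $\mon$ cannot apply here, since strict monotonicity plus surjectivity would force $|F'(\varphi)|(n)=n$ and defeat the delaying trick), so the size function should simply be made to increase by at most one per length increment, which your padding convention achieves.
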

\begin{proof}\omitted{
The map $\pi$ is computable, this provides one direction. For the other direction, note that a computation providing a result in $(Y, \delta_\mathbf{Y} \circ \pi)$ can safely be delayed as long as required to stay within any given time bound.}
\end{proof}

Weihrauch reducibility (e.g.~\cite{brattka2,brattka3,paulybrattka,paulykojiro}) is a computable many-one reduction between multivalued functions that serves as the basis of a metamathematical research programme. Likewise, a reduction that could be called polynomial-time Weihrauch reducibility has been investigated by some authors (e.g.~\cite{beame,kawamura}). In \cite{paulysearchproblems} abstract principles were demonstrated that provide a very similar degree structure for both. Let $(\mathfrak{W}, \oplus, +, \times)$ and $(\mathfrak{P}, \oplus, +, \times)$ be the corresponding degree structures for Weihrauch reducibility and polynomial-time Weihrauch reducibility. We then find:
\begin{corollary}
$(\mathfrak{W}, \oplus, +, \times)$ embeds as a substructure into $(\mathfrak{P}, \oplus, +, \times)$.
\end{corollary}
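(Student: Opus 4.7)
The plan is to define the embedding $\Phi \colon \mathfrak{W} \to \mathfrak{P}$ by sending the Weihrauch degree of a multivalued function $f \colon \mathbf{X} \rightrightarrows \mathbf{Y}$ to the polynomial-time Weihrauch degree of the same multivalued function viewed as $\hat{f} \colon \hat{\mathbf{X}} \rightrightarrows \hat{\mathbf{Y}}$, where $\hat{\mathbf{X}} = (X, \delta_X \circ \pi)$ and $\hat{\mathbf{Y}} = (Y, \delta_Y \circ \pi)$ are the $\pi$-padded representations from Section~6. The main ingredient throughout is Proposition~\ref{prop:padding}, which ensures that any computable function with a $\pi$-padded codomain is automatically polynomial-time computable; I also use that $\pi$ itself is computable and that there is a computable right-inverse $\varphi \mapsto \tilde{\varphi}$ assigning to any $\delta_X$-name a $\hat{\mathbf{X}}$-name of the same point.

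First I will verify that $\Phi$ preserves reducibility. Given a Weihrauch reduction $f \leq_W g$ witnessed by computable realizer-level maps $K$ and $H$, I set $\hat{K}(\tilde{\varphi}) := \widetilde{K(\pi(\tilde{\varphi}))}$ and $\hat{H}(\langle \tilde{\varphi}, \tilde{\psi}\rangle) := \widetilde{H(\langle \pi(\tilde{\varphi}), \pi(\tilde{\psi})\rangle)}$. Each of these is the composition of computable maps, so computable, and each has a $\pi$-padded codomain; thus Proposition~\ref{prop:padding} promotes them to polynomial time. A direct unfolding, using that for any realizer $\hat{G}$ of $\hat{g}$ the map $\pi \circ \hat{G} \circ \widetilde{\cdot}$ is a realizer of $g$, shows that $\hat{H}\langle \id, \hat{G}\hat{K}\rangle$ realizes $\hat{f}$, so $\hat{f} \leq_{pW} \hat{g}$.

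Next, for injectivity of $\Phi$ (equivalently, that $\Phi$ reflects reducibility), I note that every polynomial-time Weihrauch reduction is in particular a reduction by computable realizers. Given witnesses $(\hat{H}, \hat{K})$ for $\hat{f} \leq_{pW} \hat{g}$, the unpadded maps $K(\varphi) := \pi(\hat{K}(\tilde{\varphi}))$ and $H(\langle\varphi,\psi\rangle) := \pi(\hat{H}(\langle \tilde{\varphi}, \tilde{\psi}\rangle))$ are computable (polynomial-time implies computable, and $\pi$ and $\widetilde{\cdot}$ are computable), and a symmetric unfolding shows they witness $f \leq_W g$. Thus $f \leq_W g$ iff $\hat{f} \leq_{pW} \hat{g}$, so $\Phi$ is a well-defined order embedding from $\mathfrak{W}$ into $\mathfrak{P}$.

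Finally, to show $\Phi$ respects the algebraic operations, it suffices to prove $\widehat{f \star g} \equiv_{pW} \hat{f} \star \hat{g}$ for each $\star \in \{\oplus, +, \times\}$. In every case the canonical mediating translations (distributing or collecting the padding across sums, disjoint unions, or products of representations) are computable by inspection of the constructions recalled from \cite{paulysearchproblems}, and they have $\pi$-padded codomains on both sides, so Proposition~\ref{prop:padding} again promotes them to polynomial time. I expect the main obstacle to be bookkeeping rather than any deep mathematics: one has to keep track of several layers of pairing, of the distinction between padded and unpadded realizers, and of how the concrete definitions of $\oplus$, $+$, $\times$ from \cite{paulysearchproblems} act on representations, and then check that in each case the resulting square of translations commutes up to $\equiv_{pW}$.
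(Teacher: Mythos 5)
Your construction is essentially the paper's own (implicit) argument: the corollary is stated there as a direct consequence of Proposition~\ref{prop:padding} together with the framework of \cite{paulysearchproblems}, i.e.\ pad both representations with $\pi$, use the delaying trick to turn computable reduction witnesses into polynomial-time ones, note that polynomial-time reductions are in particular computable ones, and check that $\oplus$, $+$, $\times$ commute with padding up to polynomial-time translations. The one point to phrase carefully is that Proposition~\ref{prop:padding} does not make your specific maps $\hat{K}=\widetilde{\cdot}\circ K\circ\pi$ and $\hat{H}$ polynomial-time, but rather supplies different polynomial-time realizers of the same name-translation functions (whose output padding may depend on the input oracle, not just on $K(\pi(\tilde{\varphi}))$), so the unfolding should be carried out in a form depending only on the realized functions --- e.g.\ via the standard characterization of Weihrauch reducibility quantifying over arbitrary correct $g$-answers instead of realizers --- which is a routine adjustment.
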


The characterization of admissibility that admits a translation into the setting of computational complexity is due to \name{Schr\"oder} \cite{schroder5} (see also \cite{pauly-synthetic}). Given the \Sierp space $\mathbb{S}$ and the function space $\mathcal{C}(-,-)$, we find that there is a canonic map $\kappa_\mathbf{X} : \mathbf{X} \to \mathcal{C}(\mathcal{C}(\mathbf{X}, \mathbb{S}),\mathbb{S})$ with $\kappa(x)(f) = f(x)$. A space $\mathbf{X}$ is called computably admissible, if $\kappa_\mathbf{X}$ admits a computable partial inverse.

The space $\mathbb{S}$ has the underlying set $\{\top, \bot\}$, and the representation $\delta_\mathbb{S} : \reg \to \mathbb{S}$ defined by $\delta_\mathbb{S}(\varphi) = \top$ iff $\exists w \ . \ \varphi(w) = 1$. By the same argument as Proposition \ref{prop:padding}, any computable function into $\mathbb{S}$ is computable in polynomial time -- in fact, even linear time suffices. Thus, just as in Section \ref{sec:epb} we can use the space $\mathcal{C}^{T=P_1}(\mathbf{X}, \mathbb{S})$ as a function space and subsequently obtain a definition of polynomial-time admissibility by calling $\mathbf{X}$ polynomial-time admissible iff the (polynomial-time computable) map $\kappa_\mathbf{X} : \mathbf{X} \to \mathcal{C}^{T=P_1}(\mathcal{C}^{T=P_1}(\mathbf{X}, \mathbb{S}), \mathbb{S})$ has a polynomial-time computable partial inverse. However, this notion is of limited use:
\begin{proposition}
If $x \in \mathbf{X}$ for polynomial-time admissible $\mathbf{X}$ has a computable name, then it has a polynomial-time computable name.
\end{proposition}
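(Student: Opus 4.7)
The plan is to exploit the observation stated just before Proposition \ref{prop:padding} that any computable function into the \Sierp space $\mathbb{S}$ is in fact (linear-time) polynomial-time computable, since the representation of $\mathbb{S}$ permits arbitrary delays of the positive answer. Concretely, suppose $x \in \mathbf{X}$ has a computable -- but not a priori polynomial-time -- name $\varphi_0 \in \reg$. I would first argue that $\kappa_\mathbf{X}(x)$, viewed as a function $\mathcal{C}^{T=P_1}(\mathbf{X}, \mathbb{S}) \to \mathbb{S}$, is computable: given any $\mathcal{C}^{T=P_1}$-name $\langle n, \psi, x^l\rangle$ of some $f : \mathbf{X} \to \mathbb{S}$, the machine $M^{T=P_1}$ running on the oracle $\langle \varphi_0, \psi\rangle$ produces $f(x) \in \mathbb{S}$, and all ingredients -- including the oracle $\varphi_0$ -- are computable.

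Next, I would apply the \Sierp-padding argument to this computable function into $\mathbb{S}$, which upgrades it to a polynomial-time computable function with a second-order time bound of degree~$\leq 1$. By Corollary \ref{corr:functionsnames}, this then promotes $\kappa_\mathbf{X}(x)$ to an element of the double dual $\mathcal{C}^{T=P_1}(\mathcal{C}^{T=P_1}(\mathbf{X}, \mathbb{S}), \mathbb{S})$ that possesses a polynomial-time computable name. Finally, feeding this name into the polynomial-time computable partial inverse of $\kappa_\mathbf{X}$ guaranteed by the polynomial-time admissibility of $\mathbf{X}$ yields a polynomial-time computable name of $x$ itself, which is what the proposition asserts.

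The main conceptual point to verify is that the \Sierp-padding argument really goes through when the domain of the function in question is itself a nontrivial second-order function space rather than a plain represented space. This is, however, unproblematic: padding exploits only the delay tolerance of the codomain representation and the computability (rather than the efficiency) of the underlying algorithm, so the argument transfers without modification. The remaining steps are bookkeeping with Corollary \ref{corr:functionsnames} and the direct invocation of the polynomial-time partial inverse hypothesised in the definition of polynomial-time admissibility.
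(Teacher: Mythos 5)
Your proposal is correct and is essentially the paper's own argument, just spelled out in more detail: show $\kappa_\mathbf{X}(x)$ is computable as a map into $\mathbb{S}$, upgrade to polynomial time via the \Sierp padding argument, obtain a polynomial-time computable name via Corollary \ref{corr:functionsnames}, and push it through the polynomial-time partial inverse of $\kappa_\mathbf{X}$ using that polynomial-time computable functions preserve polynomial-time computable names.
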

\begin{proof}\omitted{
As polynomial-time computable functions preserve polynomial-time computable names, this follows from a function $f : \mathcal{C}^{T=P_1}(\mathbf{X}, \mathbb{S}) \to \mathbb{S}$ being polynomial-time computable iff it is computable together with Corollary \ref{corr:functionsnames}.}
\end{proof}

Note that this implies that all the representations suggested in \cite{kawamura} fail to be polynomial-time admissible, despite appearing to be very reasonable choices\footnote{Nevertheless, there are non-trivial polynomial-time admissible spaces. In particular, any space $\mathcal{C}^{T=P_1}(\mathbf{X}, \mathbb{S})$ will be polynomial-time admissible. Consequently, we find that there is a polynomial-time admissible space in any equivalence class regarding computable translations that is computably admissible -- but for these spaces, the formally defined polynomial-time computability actually is just computability, without any complexity-theoretic flavour to it.}.

\section{Conclusions}
The trusted techniques developed for the theory of represented spaces and computable functions are insufficient to fully comprehend polynomial-time computability. Function spaces are not always available, and even where they are, they might differ from the familiar one of the continuous functions\footnote{This observation was also made by \name{F\'er\'ee} and {Hoyrup} \cite{feree} (see also \cite{feree2}), and they suggested to use higher-order functionals on the machine level to retain spaces of continuous function with efficient evaluation. However, as shown by \name{Schr\"oder} (personal communication), this would change the notion of computability, too.}. Instead, some form of uniform continuity will be appear as the central notion.

What can be used as a guiding principle for the choice of representations is the epb property. If compatible with other criteria, choosing a representation that makes a space epb also makes function spaces well-behaved. For example, separable metric spaces are traditionally represented by encoding points by fast converging sequences of basic elements. For computability theory it does not matter what \emph{fast} means -- for complexity theory it does. A sensible choice could be: As fast as possible while retaining the epb property. Whether this already determines a representation up to polynomial-time equivalence is open, though.

\bibliographystyle{eptcs}
\bibliography{funcsp}

\section*{Acknowledgements}
The second author is grateful to Anuj Dawar, Carsten R\"osnick and Martin Ziegler for valuable discussions pertaining to the topic of the paper.
This work is supported in part by the Japanese
Grant-in-Aid for Scientific Research (\mbox{Kakenhi}) 24106002, and the Marie Curie International Research Staff Exchange Scheme \emph{Computable
Analysis}, PIRSES-GA-2011- 294962.

\end{document}